\numberwithin{equation}{section}
\long\def \beq#1\eeq {\begin{equation} #1 \end{equation}}
\long\def \beaq#1\eeaq {\begin{equation}\begin{aligned} #1 \end{aligned}\end{equation}}
\long\def \bes#1\ees {\begin{equation}\begin{split} #1 \end{split} \end{equation}}
\long\def \bea#1\eea {\begin{eqnarray} #1 \end{eqnarray}}
\long\def \bse[#1]#2\ese {\begin{subequations}\label{#1}\begin{align} #2 \end{align}\end{subequations}}
\long\def\dm[#1]{\!\operatorname{d\mu}\left(#1\right)}
\providecommand{\customgenericname}{}
\newcommand{\newcustomtheorem}[2]{%
  \newenvironment{#1}[1]
  {%
   \renewcommand\customgenericname{#2}%
   \renewcommand\theinnercustomgeneric{##1}%
   \innercustomgeneric
  }
  {\endinnercustomgeneric}
}
\theoremstyle{plain}
\newtheorem{Remark}{Remark}
\newtheorem{Theorem}{Theorem}
\newtheorem{Lemma}{Lemma}
\newtheorem{Proposition}{Proposition}
\newtheorem{Definition}{Definition}
\title{The relativistic Hopfield model\\ with correlated patterns}
\author[a]{Elena Agliari}
\author[b]{Alberto Fachechi}
\author[a]{Chiara Marullo}
\date{}
\affil[a]{Dipartimento di Matematica ``Guido Castelnuovo'', Sapienza Universit\`a di Roma, Italy}
\affil[b]{Dipartimento di Matematica e Fisica ``Ennio De Giorgi'', Universit\`a del Salento, Italy}
\begin{document}
\maketitle

\begin{abstract}
In this work we introduce and investigate the properties of the ``relativistic'' Hopfield model endowed with \emph{temporally correlated} patterns. First, we review the ``relativistic'' Hopfield model and we briefly describe the  experimental evidence underlying correlation among patterns. Then, we face the study of the resulting model exploiting statistical-mechanics tools in a low-load regime. More precisely, we prove the existence of the thermodynamic limit of the related free-energy and we derive the self-consistence equations for its order parameters. These equations are solved numerically to get a phase diagram describing the performance of the system as an associative memory as a function of its intrinsic parameters (i.e., the degree of noise and of correlation among patterns). We find that, beyond the standard retrieval and ergodic phases, the relativistic system exhibits \emph{correlated} and \emph{symmetric} regions -- that are genuine effects of temporal correlation -- whose width is, respectively, reduced and increased with respect to the classical case.
\end{abstract}

\section{Introduction}\label{sec:intro}
The Hopfield model is the prototype for neural networks meant for associative memory tasks. Its popularity is mainly due to its similarities with biological neural networks and to the fact that it can be faced analytically exploiting statistical-mechanics tools typical of disordered systems \cite{Amit,Coolen,Bovier}. In the last decade the upsurge of interest in artificial-intelligence applications has determined a renewed interest in neural networks and many variations on theme of the Hopfield model have been proposed, aiming to improve its performance as associative memory and/or to get a picture closer to biology (see e.g., \cite{Agliari-PRL1,Ton,Agliari-PRL2,Albert2}).\\
Among these, the ``relativistic'' Hopfield model \cite{Albert1,Notarnicola,Mechanics} has been introduced based on a formal analogy between the Hopfield model and a mechanical system where the Mattis magnetization (intrinsically bounded) plays as the velocity of a point mass. 
More precisely, the classical Hopfield Hamiltonian, reading as $H^{\textrm{cl}} ( \boldsymbol{ m})= -N  \boldsymbol{m}^2/2$, where $\boldsymbol{m}$  is the Mattis magnetization and $N$ is the system size, can be interpreted as the classical kinetic energy associated to a fictitious particle; its relativistic counterpart can therefore be obtained by applying the transformation
${\boldsymbol{m}}^2/2 \to \sqrt{1 + {\boldsymbol{m}}^2}$, that is, the ``relativistic'' Hopfield Hamiltonian reads as $H^{\textrm{rel}} ( \boldsymbol{ m})= -N \sqrt{1 + \boldsymbol{m}^2}$.
\newline
The relativistic model is as well able to work as associative memory, namely, it can be used to retrieve patterns of information (typically encoded by binary vectors $\boldsymbol \xi$), and its performance turns out to be comparable with the classical counterpart \cite{Albert1}. Interestingly, one can notice that, while the classical model is described only by pairwise interactions, its relativistic version (if expanded in Taylor series with respect to the order parameter $ \boldsymbol{m}$) is an infinite sum of (even) terms that describe higher-order interactions with alternating signs; the terms with negative sign can be related to unlearning mechanisms and play an important role in destabilizing the recall of spurious patterns \cite{Albert2,Albert1,Alem}.

In this work we consider the ``relativistic'' Hopfield model and we allow for patterns exhibiting temporal correlation. 
This refers to well-known experimental facts (see e.g., \cite{Amit2,Miyashita-1988, MiyashitaChang-1988}) where a temporal correlation among patterns during learning (say, $\boldsymbol \xi^{\nu}$ is repeatedly presented after pattern $\boldsymbol \xi^{\mu}$) can, upon simple stimulation, yield to the simultaneous retrieval of correlated patterns (that is, stimulation by pattern $\boldsymbol \xi^{\mu}$ can evoke both $\boldsymbol \xi^{\mu}$ and $\boldsymbol \xi^{\nu}$). 
In order to implement this mechanism into the ``relativistic'' Hopfield model we will follow the route paved by \cite{GriniastyTsodyksAmit-1993,Cugliandolo-1993, CugliandoloTsodyks-1994} for the classical model.

The resulting network, exhibiting relativistic and temporal correlation features, is addressed in the low-load regime exploiting statistical-mechanics tools. 
In particular, we prove the existence of the thermodynamic limit for this model and we get an explicit expression for its thermodynamic pressure by exploiting rigorous methods based on Guerra's interpolating techniques. The extremization of this expression allows us to get self-consistent equations for the order parameter of the model which are then solved numerically. Interestingly, according to the value of the system parameters, i.e., the degree of correlation $a$ and the degree of noise $T$, the solution is qualitatively different. We recall that in the (low-load) classical Hopfield model, large values of $T$ correspond to an ergodic phase, while small values correspond to a retrieval phase, where the system can work as an associative memory, being able to retrieve correctly a certain pattern when this is presented only partially correct. This is still the case for the current system as long as the correlation $a$ is small enough, while when $a$ is relative large a \emph{symmetric} phase (when $T$ is also relatively large) and a \emph{correlated} phase (when $T$ is relatively small) emerge. In particular, in the symmetric case, all patterns are recalled simultaneously and to the same extent; in the correlated phase the stimulating pattern is retrieved only partially and temporally closed patterns are retrieved as well, although to a lower extent.

This paper is structured as follows: in Secs.~\ref{sec:rel} and \ref{sec:tempo} we briefly review the relativistic model and the correlated classical model, respectively; then, in Sec.~\ref{sec:merge}, we merge the two generalizations and define the model on which we shall focus, providing an expression for its intensive free-energy and self-consistency equations for its order parameters; next, in Sec.~\ref{sec:numerics} we get a numerical solution for the behavior of the order parameters, as the system parameters are tuned, highlighting the genuinely relativistic features of the models; finally, Sec.~\ref{sec:conclusions} is left for discussions. Technical details are collected in the Appendices.

\section{The relativistic model} \label{sec:rel}

In this section we briefly recall the main definitions of the ``relativistic'' Hopfield model that we are going to consider and generalize in the current work.  The system is made up of $N$ binary neurons, whose state, either inactive or active, is denoted by $\sigma_i \in \{-1, +1\}$, $i=1, ..., N$. The system is embedded in a complete graph in such a way that each neuron influences and is affected by the state of all the remaining $N-1$ neurons.\\
We also introduce $P$ binary vectors of length $N$, denoted by $\boldsymbol{\xi}^{\mu}$, whose entries are binary, i.e.,  $\xi_i^{\mu} \in \{-1, +1\}$, and extracted i.i.d. with equal probability, for $i=1,...,N$ and $\mu=1,...,P$. These binary vectors are meant as patterns that the neural network is designed to \emph{retrieve}. Before explaining this concept in more details it is convenient to introduce a few more quantities. 
\newline
As anticipated in the previous section, the Hamiltonian of the relativistic model reads as 
\begin{equation}
\label{hamiltoniananrelativistica}
H_{N}^{\textrm{rel}} ( \boldsymbol{ \sigma}| \boldsymbol{\xi}) := -N \sqrt{1+ \boldsymbol{m}_N^2},
\end{equation}
where $\boldsymbol{m}_N$ is the Mattis magnetization, whose $\mu$-th entry is defined as
\begin{equation}
m^{\mu}_N := \frac{1}{N} \sum_{i=1}^N \xi_i^{\mu} \sigma_i,
\end{equation}
namely it measures the alignment of the neural configuration with the $\mu$-th pattern, and it plays as order parameter for the model. Notice that, here and in the following, the subscript $N$ highlights that we are dealing with a finite-size system and it will be omitted when taking the thermodynamic limit.  Exploiting the last definition, and assuming $|\boldsymbol{m}_N | < 1$, the Hamiltonian (\ref{hamiltoniananrelativistica}) can be Taylor-expanded in terms of neuron states and pattern entries as
\begin{equation}
\nonumber
- \frac{H_N^{\textrm{rel}}(\boldsymbol{\sigma}| \boldsymbol{\xi})}{N}= 1 + \frac{1}{2N^2} \sum_{ij} \left(\sum_{\mu=1}^P {\xi}_i  ^\mu{\xi}_j^\mu \right) \sigma_i \sigma_j - \frac{1}{8N^4} \sum_{ijkl} \left(\sum_{\mu=1}^P {\xi}_i ^\mu {\xi}_j^\mu \right) \left(\sum_{\nu=1}^P{\xi}_k  ^\nu{\xi}_l^\nu \right) \sigma_i \sigma_j \sigma_k \sigma_l+ \mathcal{O}(\sigma^6).
\end{equation}
This expression highlights that the ``relativistic'' model includes higher-order interactions among spins.
Next, we introduce the Boltzmann-Gibbs measure for the model described by (\ref{hamiltoniananrelativistica}) as
\begin{equation} \label{eq:gibbs}
\mathcal{G}_{N,\beta}^{\textrm{rel}}(  \boldsymbol \sigma \vert \boldsymbol \xi) := \frac{e^{-\beta H_{N}^{\textrm{rel}} ( \boldsymbol{ \sigma}| \boldsymbol{\xi})}}{Z_{N,\beta}^{\textrm{rel}}(\boldsymbol \xi)},\quad  Z_{N,\beta}^{\textrm{rel}}( \boldsymbol \xi):=\sum_{ \{\boldsymbol \sigma \}}e^{-\beta H_{N}^{\textrm{rel}} ( \boldsymbol{ \sigma}| \boldsymbol{\xi})},
\end{equation}
where $1/\beta \in \mathbb{R}^+$ accounts for the noise level in the system (in such a way that for $\beta \to 0$ the neuronal configuration is completely random, while in the $\beta \to \infty$ limit the Hamiltonian plays as a Lyapounov function) and where $Z_{N,\beta}^{\textrm{rel}}( \boldsymbol \xi )$, referred to as partition function, ensures the normalization of the Gibbs measure.

Focusing on the so-called low-load regime, that is, as the number $N$ of neurons is made larger and larger, the number $P$ of stored patterns grows sub-linearly with $N$, namely
\begin{equation}
\alpha:= \lim_{N \to \infty} \frac{P}{N}=0,
\end{equation}
one finds that, in the thermodynamic limit, the expectation of the Mattis magnetization is given by \cite{Notarnicola}
\begin{equation} \label{eq:sol_r}
 \langle m^{\mu} \rangle_{\textrm{rel}}:= \lim_{N \to \infty} \langle m_N^{\mu} \rangle_{\textrm{rel}} = \mathbb{E} \left[ \xi^{\mu} \tanh \left( \beta \sum_\mu \frac{{\xi}^\mu \langle {m}_\mu \rangle_{\textrm{rel}} }{\sqrt{1+ \langle \boldsymbol{m} \rangle_{\textrm{rel}}^2}} \right)  \right],
\end{equation}
where  the average $\langle \cdot \rangle_{\textrm{rel}}$ is meant with respect to the Gibbs measure (\ref{eq:gibbs}), while the average $\mathbb{E}$ is meant over the pattern realization, namely $\mathbb{E} := 2^{-NP} \prod_{i, \mu=1}^{N,P}  \sum_{ \xi_i^{\mu} = \pm 1}$.
Of course, in the limit $ | \langle \boldsymbol{m}\rangle_{\textrm{rel}} | \ll 1$, we can expand the solution (\ref{eq:sol_r}) 
and recover the classical result \cite{Amit,Coolen}.\footnote{Of course, this expansion is only formal, since the two models are not related by a Taylor expansion. Indeed, also for the classical Hopfield model we have $|\langle \boldsymbol m \rangle_{\text{rel}}|\sim 1$ in the retrieval regime at low thermal noise.}

The self-consistency equation (\ref{eq:sol_r}) can be solved numerically. When $\beta$ is small (i.e. the thermal noise is high), the only solution is given by $\langle m^{\mu} \rangle_{\textrm{rel}} = 0$ $\forall \mu$, which corresponds to an ergodic system unable to retrieve. On the other hand, when $\beta$ is large (and thus the thermal noise is low), the system can relax to final configurations whose Mattis magnetization vector $\langle \boldsymbol{m} \rangle_{\textrm{rel}}$ satisfies $\langle m^{\mu} \rangle_{\textrm{rel}} \neq 0$ and $\langle m^{\nu} \rangle_{\textrm{rel}} = 0, \forall \ \nu \neq \mu$, which is interpreted as the \emph{retrieval} of the $\mu$-th pattern. This is also confirmed by extensive Monte Carlo simulations, as shown in \cite{Albert1}. There, the Authors found that the ``relativistic'' Hopfield model (at least in the low-load regime) is dynamically more sensible to thermal noise w.r.t. its classical counterpart. This is true both for random initial conditions and for starting configurations which are aligned to spurious states. Indeed, the largest amount of thermal noise still ensuring retrieval is lower in the ``relativistic'' Hopfield model and this is a consequence of the fact that in this model energetic wells are shallower w.r.t. to those of the Hopfield network, in such a way that the probability to escape from these wells is higher with respect to the classical reference.\footnote{We stress that this difference is only dynamical. Indeed, the critical temperature for the transition to the ergodicity is fixed to $\beta_c=1$ (in the thermodynamic limit), which is the same as the classical Hopfield model.}

\section{Temporally correlated patterns} \label{sec:tempo}
The pairwise contribution in the Hamiltonian (\ref{hamiltoniananrelativistica}) is given by the standard Hebbian coupling ($J_{ij}^{\textrm{hebb}}:= \frac{1}{N}\sum_{\mu=1}^P \xi_i^{\mu} \xi_j^{\mu}$ for the couple ($i,j$)). The latter can be generalized in order to include more complex combinations among patterns. For instance, we can write
\begin{equation} \label{eq:J_A}
J_{ij} = \frac{1}{N} \sum_{\mu, \nu =1}^{P,P} \xi_i^{\mu} X_{\mu \nu} \xi_j^{\nu},
\end{equation}
where $\boldsymbol{X}$ is a symmetric matrix.  For example, the model by Personnaz {\it et al.} \cite{Personnaz} (later studied by Kanter and Sompolisnky \cite{KanterSompo} from the thermodynamical perspective) and the removal\&consolidation model \cite{Albert2,Alem} belong to this class of networks. 
 Another interesting example is represented by the Hopfield model with minimal\footnote{The adjective {\it minimal} stresses that the temporal correlation only involves closest patterns, i.e. $(\mu,\mu+1)$ and $(\mu-1,\mu)$.} temporal correlation, meaning that patterns are coupled to the nearest ones with an interaction strength $a \in [0, 1]$ (see also previous investigations in \cite{Amit2,GriniastyTsodyksAmit-1993,Cugliandolo-1993,CugliandoloTsodyks-1994,Agliari-Dantoni}). In mathematical terms, the coupling matrix is given by
\begin{equation}\label{eq:JJ_A}
J_{ij}^{\text{corr}}=\frac1N\sum_{\mu=1}^P[\xi^\mu_i \xi^\mu_j+a(\xi^{\mu+1}_i \xi^{\mu}_j+\xi_i^{\mu-1} \xi^\mu _j) ],
\end{equation}
where periodic boundary conditions (i.e., $\xi^{P+1}=\xi^1$ and $\xi^{0}=\xi^P$) are adopted and the compact form \eqref{eq:J_A} is recovered for 
\begin{eqnarray} \label{eq:connection}
\pmb{X}  = \left(
\begin{array}{ccccc}
1 & a &  \cdots & 0 & a  \\
a & 1& \cdots & 0 & 0  \\
\vdots  & \vdots & \ddots & \vdots & \vdots  \\
0 & 0 &  \dots & 1 & a  \\
a & 0 &  \dots & a & 1  \\
\end{array}
\right).
\end{eqnarray}
For the classical Hopfield model the related Hamiltonian therefore reads as
\begin{equation}\label{eq:H_a}
H_{N,a}^{\textrm{cl, corr}}(\boldsymbol \sigma |  \boldsymbol \xi ) = -\frac{1}{2N} \sum_{i,j}^{N,N} \sum_{\mu=1}^P [\xi_i^{\mu} \xi_j^{\mu} + a(\xi_i^{\mu+1} \xi_j^{\mu} +\xi_i^{\mu-1} \xi_j^{\mu} ) ] \sigma_i \sigma_j.
\end{equation}
The corresponding Gibbs measure is
\begin{equation}
\mathcal{G}_{N, \beta, a}^{\textrm{cl, corr}}(  \boldsymbol \sigma\vert \boldsymbol \xi) := \frac{e^{-\beta H_{N,a}^{\textrm{cl, corr}} ( \boldsymbol{ \sigma}| \boldsymbol{\xi})}}{Z_{N,\beta, a}^{\textrm{cl, corr}}( \boldsymbol \xi)},
\end{equation} 
where, again, the partition function $Z_{N,\beta,a}^{\textrm{cl, corr}}(\boldsymbol \xi)$ ensures normalization, and the related Gibbs average is denoted as $\langle  \cdot \rangle_{\textrm{cl, corr}}$.

As anticipated in Sec.~\ref{sec:intro}, this modification of the Hopfield model captures some basic experimental facts \cite{Miyashita-1988, MiyashitaChang-1988}: a temporal correlation among visual stimuli during learning can spontaneously emerge also during retrieval.
%
%
Indeed, the model (\ref{eq:H_a}) is able to reproduce this experimental feature in both low \cite{GriniastyTsodyksAmit-1993,Cugliandolo-1993} and high \cite{CugliandoloTsodyks-1994} storage regimes. For the former, in the thermodynamic limit, the following self-consistent equation for the order parameter holds \cite{GriniastyTsodyksAmit-1993}
\begin{equation} \label{eq:selfcons_a}
\langle m^{\mu} \rangle_{\textrm{cl, corr}} = \mathbb{E} \bigg[ \xi^{\mu} \, \tanh \Big( \beta \sum_{\mu=1}^P \langle m^{\mu} \rangle_{\textrm{cl, corr}} [\xi_i^{\mu} + a (\xi_i^{\mu+1} + \xi_i^{\mu-1}) ] \Big) \bigg ].
\end{equation}

In \cite{GriniastyTsodyksAmit-1993}, the previous equation was solved by starting from a pure pattern state (say, $\boldsymbol \sigma = \boldsymbol \xi^1$) and iterating until convergence.
%
%
In the noiseless case ($\beta \to \infty$), where the hyperbolic tangent can be replaced by the sign function, the pure state is still a solution if $a \in [0, 1/2)$, while if $a \in (1/2,1]$, the solution is characterized by the Mattis magnetizations (assuming $P \geq 10$)
\begin{equation} \label{eq:ansatz_leti}
\pmb{m}^T= \frac{1}{2^7} (77,51,13,3,1,0,...,0,...,0,1,3,13,51),
\end{equation}
namely, the overlap with the pattern $\boldsymbol \xi^1$ used as stimulus is the largest and the overlap with the neighboring patterns in the stored sequence decays symmetrically until vanishing at a distance of $5$. 

In the presence of noise, one can distinguish four different regimes according to the value of the parameters $a$ and $\beta$.
The overall behavior of the system is summarized in the plot of Fig.~\ref{phasediagram} (left panel).
A similar phase diagram, as a function of $\alpha$ and $a$, was drawn in \cite{CugliandoloTsodyks-1994} for the high-storage regime.

\begin{figure}[h!]
\begin{center}
\includegraphics[width=0.9\textwidth]{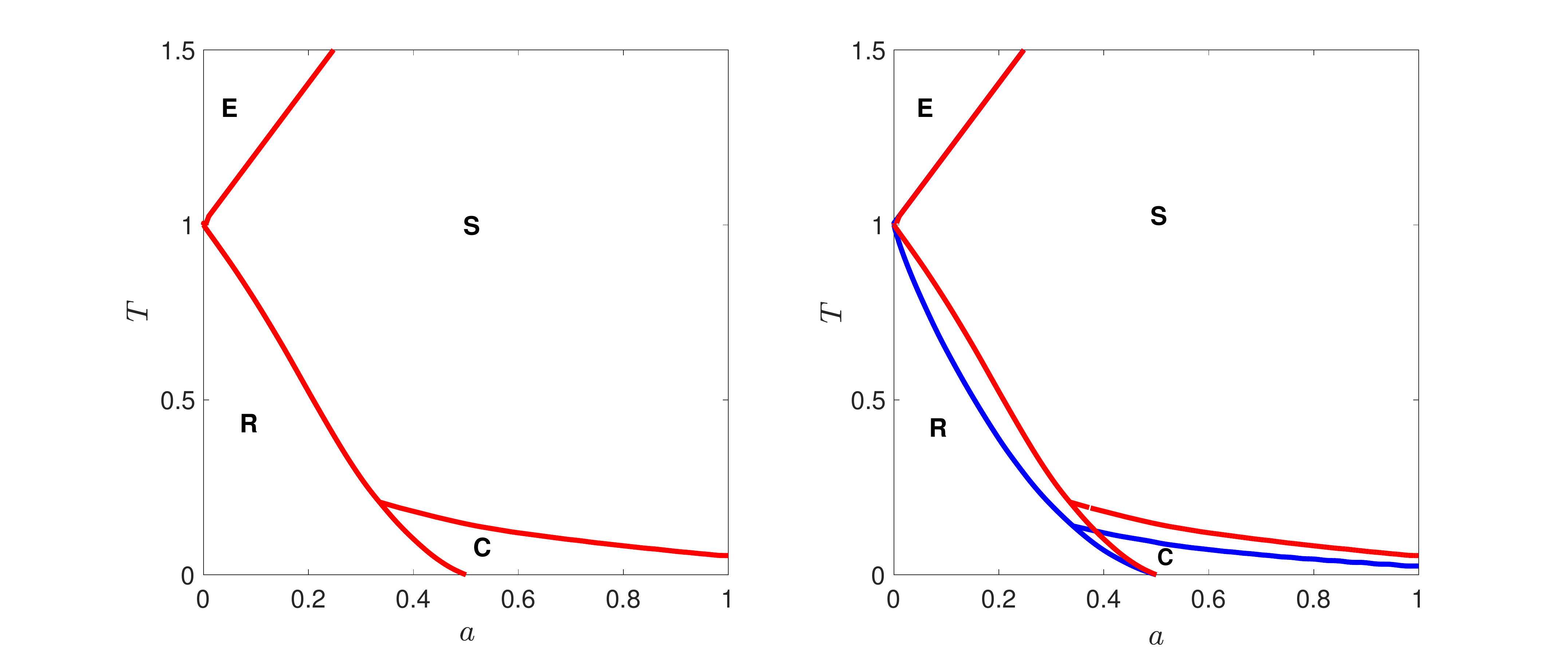}
\caption{Left panel: Phase diagram for the classical correlated model with low storage ($P=5$) described by the Hamiltonian (\ref{eq:H_a}). At a high level of noise the system is ergodic (E) and, for any initial configuration, it eventually reaches a state with $m^{\mu}=0, \forall \mu$. At lower temperatures (below the transition line), the system evolves to a so-called symmetric state (S), characterized by, approximately, $m^{\mu}=m \neq 0, \forall \mu$.
Then, if $a$ is small enough, by further reducing the temperature (below the transition line), the pure state retrieval (R) can be recovered.
On the other hand, if $a$ is larger, as the temperature is reduced, correlated attractors (C) appear according to Eq.~(\ref{eq:ansatz_leti}). Then, if the temperature is further lowered, the system recovers the retrieval state, yet if $a>1/2$, this state is no longer achievable. Right panel: Phase diagram for the relativistic correlated model (solid line) with low storage ($P=5$) described by the Hamiltonian (\ref{Hopfield-rel}) compared with that obtained for the classical case (dashed line) described by the Hamiltonian (\ref{eq:H_a}). Analogous regions (E, R, S, C) emerge, but here the symmetric region is wider having partially invaded the correlated and the retrieval regions, on the other hamd the ergodic phase is unchanged.}
\label{phasediagram}
\end{center}
\end{figure}

\section{The ``relativistic'' Hopfield model with temporally correlated patterns} \label{sec:merge}
The discussion in the previous section only concerns the classical Hopfield model and it is then natural to question about the consequences of temporal correlation between patterns in its relativistic extension. Then, in this Section we consider a neural network which merges the two features described in Sec.~\ref{sec:rel} and  Sec.~\ref{sec:tempo}, respectively. \medskip
\begin{Definition}
Given the temporal correlation strength $a \in [0,1]$, the Hamiltonian of the ``relativistic'' Hopfield model with (minimal) cyclic temporal correlation between patterns is
\begin{equation}
\label{Hopfield-rel}
H_{N,a}^{\textrm{rel,corr}}(\boldsymbol \sigma| \boldsymbol \xi) = - N \sqrt{1+ \frac{1}{N^2}\sum_{\mu=1}^{P}\sum_{i,j=1}^{N,N}{\sigma_i \sigma_j[\xi_i^{\mu}
\xi_j^{\mu} + a(\xi_i^{\mu+1}\xi_j^{\mu}+ \xi_i^{\mu-1}\xi_j^{\mu})}]},
\end{equation}
where $\sigma_i=\pm 1$ $i\in \{1,...,N\}$ are the binary variables representing the neural activities and the entries of the $P$ digital patterns $\boldsymbol{\xi}^\mu$, $\mu \in \{1,...,P\}$ are independently drawn with equal probability $\mathbb{P}(\xi^\mu_i=+1)=\mathbb{P}(\xi^\mu_i=-1)=\frac{1}{2}.$
\end{Definition}
Notice that the Hamiltonian function \eqref{Hopfield-rel} can be put in a more compact form in terms of the Mattis magnetization $\boldsymbol m$ and the correlation matrix $\boldsymbol X$ as
\begin{equation} \label{Hopfield-rel1}
H_{N,a}^{\textrm{rel},\textrm{corr}}(\boldsymbol \sigma| \boldsymbol \xi) := - N \sqrt{1+ \boldsymbol{m}^T \boldsymbol{X}\boldsymbol{m}}.
\end{equation}
This Hamiltonian yields to the Gibbs measure
\begin{equation}
\mathcal{G}_{N,\beta,a}^{\textrm{rel},\textrm{corr}}(\boldsymbol \sigma\vert\boldsymbol \xi) := \frac{e^{-\beta H_{N,a}^{\textrm{rel},\textrm{corr}} ( \boldsymbol{ \sigma}| \boldsymbol{\xi})}}{Z_{N,\beta,a}^{\textrm{rel},\textrm{corr}}(\boldsymbol \xi)},
\end{equation} 
where, as usual, $Z_{N,\beta,a}^{\textrm{rel},\textrm{corr}}( \boldsymbol \xi)$ ensures normalization, and the related Gibbs average is denoted as $\langle  \cdot \rangle_{\textrm{rel},\textrm{corr}}$.
In the following we will drop the superscript and subscript ``$\textrm{rel},\textrm{corr}$'' in order to lighten the notation.

We are now investigating the model from a statistical-mechanics perspective, where the key quantity to look at is the intensive pressure \footnote{We recall that the free energy $\tilde{F}$ equals the pressure $F$, a constant apart, that is $F = -\beta \tilde F$.}.
\begin{Definition}
Using $\beta\in \mathbb R^+$ as the parameter tuning the thermal noise, the intensive pressure associated to the ``relativistic'' Hopfield model with (minimal) cyclic temporal correlation is given by
\begin{equation}\label{original_model}
    F_{N,\beta,a}( \boldsymbol \xi) := \frac{1}{N}  \left [\log Z_{N ,\beta,a }(\boldsymbol \xi) \right] =  \frac{1}{N}  \left[ \log \sum_{ \boldsymbol \sigma } e^{-\beta H_{N,a}(\boldsymbol \sigma| \boldsymbol \xi)} \right].
\end{equation}
\end{Definition}
An important feature of the intensive pressure in the thermodynamic limit is presented in the following\medskip
\begin{Proposition}\label{prop1}
In the thermodynamic limit, the self-average property of the intensive pressure holds, i.e.
\begin{equation}
\lim_{N \to \infty} \mathbb{E} \left \{  F_{N,\beta,a}(\boldsymbol \xi) - \mathbb{E} [F_{N, \beta,a}( \boldsymbol \xi)]  \right \}^2 =0,
\end{equation}
\end{Proposition}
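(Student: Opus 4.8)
The plan is to deduce the self-averaging property from a concentration-of-measure argument: since the pattern entries $\xi_i^\mu$ are i.i.d.\ and the pressure $F_{N,\beta,a}(\boldsymbol\xi)$ depends only weakly on any single one of them, the fluctuations of $F_{N,\beta,a}$ must decay with $N$.

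The first and main step is a \emph{bounded-difference estimate}: if $\boldsymbol\xi$ and $\boldsymbol\xi^{(i,\mu)}$ denote two pattern realizations that coincide except for the single entry $\xi_i^\mu$, then
\[
\big| F_{N,\beta,a}(\boldsymbol\xi) - F_{N,\beta,a}(\boldsymbol\xi^{(i,\mu)}) \big| \;\le\; \frac{c(\beta,a)}{N}
\]
for a constant $c(\beta,a)$ independent of $N$, $i$, $\mu$. To get this one uses the elementary bound $\big|\log Z_{N,\beta,a}(\boldsymbol\xi) - \log Z_{N,\beta,a}(\boldsymbol\xi^{(i,\mu)})\big| \le \beta \sup_{\boldsymbol\sigma}\big|H_{N,a}(\boldsymbol\sigma|\boldsymbol\xi) - H_{N,a}(\boldsymbol\sigma|\boldsymbol\xi^{(i,\mu)})\big|$, which reduces the problem to controlling the variation, at fixed $\boldsymbol\sigma$, of $-N\sqrt{1 + \boldsymbol m^T\boldsymbol X\boldsymbol m}$ as written in \eqref{Hopfield-rel1}. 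Flipping $\xi_i^\mu$ changes only the component $m^\mu$ of the Mattis vector, and only by $2/N$; since $|m^\nu|\le 1$ for all $\nu$ and the matrix $\boldsymbol X$ in \eqref{eq:connection} has row sums bounded by $1+2a$, a mean-value estimate gives $\big|\boldsymbol m^T\boldsymbol X\boldsymbol m - (\boldsymbol m^{(i,\mu)})^T\boldsymbol X\boldsymbol m^{(i,\mu)}\big|\le C(a)/N$. Composing with the Lipschitz continuity of $x\mapsto\sqrt{1+x}$ (valid with a finite constant as long as $1+\boldsymbol m^T\boldsymbol X\boldsymbol m$ stays bounded away from $0$) then yields $\big|H_{N,a}(\boldsymbol\sigma|\boldsymbol\xi) - H_{N,a}(\boldsymbol\sigma|\boldsymbol\xi^{(i,\mu)})\big| \le N\cdot C'(a)/N = C'(a)$, and dividing by $N$ gives the claimed bound.

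With the bounded-difference estimate in hand, I regard $F_{N,\beta,a}$ as a function of the $NP$ independent variables $\{\xi_i^\mu\}$ and build the Doob martingale obtained by revealing these entries one at a time; its increments $\Delta_k$ satisfy $|\Delta_k|\le c(\beta,a)/N$ pointwise. By orthogonality of the martingale increments (equivalently, by the Azuma--Hoeffding / McDiarmid inequality),
\[
\mathbb{E}\big\{F_{N,\beta,a}(\boldsymbol\xi) - \mathbb{E}[F_{N,\beta,a}(\boldsymbol\xi)]\big\}^2 \;=\; \sum_{k=1}^{NP}\mathbb{E}[\Delta_k^2] \;\le\; NP\cdot\frac{c(\beta,a)^2}{N^2} \;=\; c(\beta,a)^2\,\frac{P}{N},
\]
which tends to $0$ as $N\to\infty$ since $P/N\to\alpha=0$ in the low-load regime, proving the proposition.

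I expect the bounded-difference estimate to be the only genuinely delicate point: one must make sure the square-root nonlinearity is handled \emph{uniformly}, i.e.\ that $1+\boldsymbol m^T\boldsymbol X\boldsymbol m$ is bounded away from zero uniformly in $\boldsymbol\sigma$ (and in $\boldsymbol\xi$). For $a\le 1/2$ this is immediate because $\boldsymbol X\succeq 0$, so $1+\boldsymbol m^T\boldsymbol X\boldsymbol m\ge 1$; for larger $a$ one works on the overwhelmingly probable event $\|\boldsymbol m\|^2\le 1+o(1)$, which follows from a standard operator-norm bound on the random Hebbian matrix. Everything else is a routine application of concentration, with the rate of convergence governed entirely by $\alpha_N=P/N$.
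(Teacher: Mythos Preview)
Your proof is correct and rests on the same martingale decomposition as the paper's, but the execution differs in two places. First, the filtration: the paper reveals the patterns \emph{site by site} (all $P$ entries $\{\xi_k^\mu\}_{\mu=1}^P$ at once), giving $N$ increments $\Psi_k=F_N^k-F_N^{k+1}$ and a variance bound of order $1/N$; you reveal them \emph{entry by entry}, giving $NP$ increments and a bound of order $P/N$. Second, the increment estimate: the paper bounds $\Psi_k$ by introducing an interpolating Hamiltonian $\Phi_k(t)$ that switches on the couplings of site $k$ continuously and then controls $g_k'(t)$ (a Pastur--Shcherbina--Tirozzi style argument), whereas you go straight to a deterministic Lipschitz bound on $H_{N,a}$ under a single spin flip in $\boldsymbol\xi$, which is the McDiarmid/bounded-difference route. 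Your path is more elementary and yields the same conclusion with an explicit constant; the paper's interpolation is closer in spirit to the Guerra techniques used elsewhere in the article and groups the $P$ coordinates at a site together, which is why its rate does not carry the explicit factor $P$. Your caveat about $a>1/2$ (positivity of $1+\boldsymbol m^T\boldsymbol X\boldsymbol m$) is well taken: the paper does not address this point either, and for $a\le 1/2$ both arguments are clean since $\boldsymbol X\succeq 0$.
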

This is quite an expected result for the model under investigation (see e.g., \cite{Guerra2}), and we provide a complete proof in Appendix \ref{sec:automedia}; due to this property, in the thermodynamic limit we can drop the dependence of the intensive pressure on the set of stored patterns, i.e.
\begin{equation}
F_{\beta, a} := \lim_{N \to \infty} F_{N, \beta, a}(\boldsymbol \xi).
\end{equation}
Once the basic objects are introduced, we can turn to the thermodynamical analysis of the ``relativistic'' Hopfield model with correlated patterns. First, we prove the existence of the thermodynamic limit for the intensive pressure, also deriving an explicit expression in terms of the order parameters (the Mattis magnetizations); then, recalling that as $N \to \infty$ the Gibbs measure concentrates around configurations corresponding to maxima of the pressure, we can obtain an estimate of $\langle \boldsymbol m \rangle$ by looking for the extremal points of the pressure.

To follow this route it is convenient to factorize the matrix $\boldsymbol{X}$. Since real symmetric matrices are diagonalizable by orthogonal matrices, $\boldsymbol{X}$ can be written as $ \boldsymbol{X}=\boldsymbol{U}\boldsymbol{D} \boldsymbol{U}^{T}$, where $\boldsymbol{D}$ is a diagonal matrix whose diagonal elements are the eigenvalues of the matrix $\boldsymbol{X}$, while $\boldsymbol{U}$ and $\boldsymbol{U}^T$ are unitary rotation matrices.
Next, we call $ \tilde{\xi}_i^{\mu}:= \sum_{\nu=1}^P (\sqrt{ \boldsymbol{D}} \boldsymbol{U}^{T})_{\mu \nu} ~ \xi_i^{\nu}$, that is, in compact notation,
\begin{equation} 
 \tilde{\boldsymbol{\xi}}:=\sqrt{ \boldsymbol{D}} \boldsymbol{U}^{T}\boldsymbol{\xi}
 \end{equation}
in such a way that the Hamiltonian (\ref{Hopfield-rel}) can be rewritten as
\begin{equation}
     \label{Hopfield-relativistico}
 H_{N,a}(\boldsymbol \sigma| \boldsymbol \xi) = - N \sqrt{1+ \frac{1}{N^2}\sum_{\mu=1}^{P}\sum_{i,j=1}^{N,N}{\sigma_i \sigma_j\tilde{\xi_i^{\mu}}\tilde{
\xi_j^{\mu} }}}.
\end{equation}
Analogously, by introducing the "rotated" Mattis magnetizations $\tilde{{m}}_{N}^\mu= \frac{1}{N} \sum_{i=1}^N \tilde{{\xi}}^\mu_{i}\sigma_{i}$, the Hamiltonian (\ref{Hopfield-relativistico}) can be recast as
\begin{equation} \label{eq:Hruotata}
    H_{N,a}(\boldsymbol \sigma| \boldsymbol \xi)= -N \sqrt{1+ \tilde{\boldsymbol{m}}_N^2}.
\end{equation}
As shown in Appendix \ref{app:Conti1}, the rotation induced by $\boldsymbol X$ preserves pairwise uncorrelation among patterns.

In order to prove the existence of the thermodynamic limit for $F_{N,\beta, a}( \boldsymbol \xi)$ we adopt the scheme originally developed by Guerra and Toninelli \cite{Guerra1,Guerra2} for Hamiltonians that are quadratic forms (and this is the case for the model under study, see eq.~\ref{eq:Hruotata}).
More precisely, omitting the subscript $\beta, a$ to lighten the notation, we will prove that $F_{N}( \boldsymbol \xi)$ is sub-additive with respect to $N$, whence, by Fekete's lemma, $\lim_{N \to \infty} F_{N}( \boldsymbol \xi)$ exists finite and corresponds to the lower bound of the sequence $\{ F_{N}( \boldsymbol \xi) \}$; details are given in the Appendix \ref{sec:esistenza}.

The next step is to obtain an explicit expression for the thermodynamic limit of the intensive pressure $F_{\beta, a}$ in terms of the Mattis magnetizations.
To this aim we introduce the following interpolating pressure with the following \medskip
%
%
\begin{Definition}
 We define an interpolating pressure as
\begin{equation}
\label{alpha-intes}
\bar{F}_N(t) := \frac{1}{N}  \log   \sum_{ \{\boldsymbol \sigma \}}
\exp\Bigg( t \beta N \sqrt{1 + \sum_{\mu=1}^{P} (\tilde{m} ^{\mu})^2} + (1-t) \beta \sum_{\mu=1}^{P} \psi^{\mu} \sum_{i=1}^{N} \tilde{\xi}_i^{\mu}\sigma_i \Bigg),
\end{equation}
where $t \in [0,1]$ is a scalar interpolating parameter and $\psi^{\mu}$, $\mu \in \{1,...,P\}$ are $P$ fields that are functions depending on patterns $\boldsymbol{\tilde{\xi}}^{\mu}$. 
\end{Definition}
At this stage $\{\psi^{\mu}\}_{\mu=1}^P$, can be taken as arbitrary  and their definition will be given a posteriori.
\newline
Notice that the interpolating pressure $\bar F_N(t)$ evaluated at $t=1$ recovers the pressure for the model under study, while, evaluated at $t=0$, recovers the pressure for a system displaying only one-body interactions which can be faced directly. The route we will pursue is to relate these two cases via the fundamental theorem of integral calculus, i.e.
\begin{equation} \label{nuova}
\begin{aligned}
F_{\beta, a} &=\lim_{N \to \infty} \bar{F}_N(t=1) 
=\lim_{N \to \infty}\Big( \bar{F}_N(t=0) + \int_{0}^{1} \frac{d \bar{F}_N(t)}{dt} \bigg \rvert_{t=t'} dt' \Big).
\end{aligned}
\end{equation}
However, before proceeding along this way, a couple of remarks are in order.\medskip
\begin{Remark}
Given a function $f(\boldsymbol \sigma|\boldsymbol {\tilde \xi})$, the interpolating pressure (\ref{alpha-intes}) yields to a generalized average, denoted by $\langle f(\boldsymbol \sigma| \boldsymbol {\tilde \xi} ) \rangle_t$ and defined as
\begin{equation}
\label{Finterpol}
\begin{aligned}
& \langle f(\boldsymbol \sigma| \boldsymbol{ \tilde \xi}) \rangle_t:= 
& \frac{\sum_{\{\boldsymbol \sigma \}} f(\boldsymbol \sigma| \tilde{\boldsymbol{\xi}}) \exp\left( t \beta N \sqrt{1 +
		\tilde{\boldsymbol{m}}_N^2}
 + (1-t) \beta \sum_{\mu=1}^{P} \psi^{\mu} \sum_{i=1}^{N} \tilde{\xi}_i^{\mu}\sigma_i \right) }{\sum_{\{\boldsymbol \sigma\}} \exp\left( t \beta N \sqrt{1 +\tilde{\boldsymbol{m}}_N^2} + (1-t) \beta \sum_{\mu}^{P} \psi^{\mu} \sum_{i=1}^{N} \tilde{\xi}_i^{\mu}\sigma_i \right)}.
\end{aligned}
\end{equation}
\end{Remark}
\begin{Remark}\label{rem:selfaver}
We assume the self-averaging properties of the order parameters, meaning that the fluctuations of the Mattis magnetizations $ \tilde{\boldsymbol m}_N$ with respect to its equilibrium value $\langle \boldsymbol {\tilde m_N}\rangle_t$ vanish in the thermodynamic limit. Indeed, since the network is in the low storage regime, we can expect that it exhibits a ferromagnetic-like behavior. Therefore, it is reasonable to require that the covariance of the magnetizations scales as $N^{-1}$, or equivalently that
\begin{equation}
\label{eq:magnsus}
\lim_{N \to \infty } \big\vert N( \langle  {\tilde m}_N^\mu {\tilde m}_N^\nu\rangle_t -\langle  {{\tilde m}^\mu_N}\rangle_t \langle  {{\tilde m}^\nu_N}\rangle_t)\big\vert < +\infty
\end{equation}
almost everywhere and for all $\mu,\nu=1,\dots,P$. This is a straightforward generalization of the fact that, in ferromagnetic systems, the magnetic susceptibility is finite almost everywhere in the thermodynamic limit. 
\end{Remark}
As a consequence, we can state the following\medskip
\begin{Lemma}\label{prop:constant}
	It is possible to choose the tunable parameters $\boldsymbol \psi$ in order for the generalized-average magnetization $\langle \tilde{\boldsymbol{m}}_N \rangle_t$ (in the thermodynamic limit and under the self-average hypothesis) to be independent on $t$ almost everywhere, i.e.
	$$\frac{d  }{dt}\lim_{N \to \infty } \langle \tilde{\mathbf{m}}_N \rangle_{t}= 0\quad a.e.$$%
\end{Lemma}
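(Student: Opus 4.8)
The plan is to compute the $t$-derivative of the generalized average $\langle\tilde m_N^\mu\rangle_t$ explicitly, using the fact that the interpolating Boltzmann weight in \eqref{alpha-intes} has the exponent $t\beta N\sqrt{1+\tilde{\boldsymbol m}_N^2}+(1-t)\beta\sum_\nu\psi^\nu\sum_i\tilde\xi_i^\nu\sigma_i$. Differentiating a Boltzmann average $\langle f\rangle_t$ with respect to $t$ produces the standard connected-correlator formula $\frac{d}{dt}\langle f\rangle_t=\langle f\,\partial_t(\text{exponent})\rangle_t-\langle f\rangle_t\langle\partial_t(\text{exponent})\rangle_t$, i.e. the covariance of $f$ with $\partial_t(\text{exponent})$. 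Here $\partial_t(\text{exponent})=\beta N\big(\sqrt{1+\tilde{\boldsymbol m}_N^2}-\frac1N\sum_\nu\psi^\nu\sum_i\tilde\xi_i^\nu\sigma_i\big)=\beta N\big(\sqrt{1+\tilde{\boldsymbol m}_N^2}-\sum_\nu\psi^\nu\tilde m_N^\nu\big)$, where I used $\frac1N\sum_i\tilde\xi_i^\nu\sigma_i=\tilde m_N^\nu$. So with $f=\tilde m_N^\mu$ one gets
\begin{equation}\nonumber
\frac{d}{dt}\langle\tilde m_N^\mu\rangle_t=\beta N\,\Big(\big\langle\tilde m_N^\mu\,\big(\sqrt{1+\tilde{\boldsymbol m}_N^2}-\textstyle\sum_\nu\psi^\nu\tilde m_N^\nu\big)\big\rangle_t-\langle\tilde m_N^\mu\rangle_t\big\langle\sqrt{1+\tilde{\boldsymbol m}_N^2}-\textstyle\sum_\nu\psi^\nu\tilde m_N^\nu\big\rangle_t\Big).
\end{equation}

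Next I would invoke the self-averaging hypothesis of Remark 4: in the thermodynamic limit the magnetizations concentrate on $\langle\tilde{\boldsymbol m}_N\rangle_t$, so any smooth function $g(\tilde{\boldsymbol m}_N)$ satisfies $\langle\tilde m_N^\mu g(\tilde{\boldsymbol m}_N)\rangle_t-\langle\tilde m_N^\mu\rangle_t\langle g(\tilde{\boldsymbol m}_N)\rangle_t\to$ a quantity controlled by the covariance matrix, which by \eqref{eq:magnsus} is $O(1/N)$. Taylor-expanding $g$ around the mean, the leading term of the covariance above is $\sum_\nu\partial_\nu g(\langle\tilde{\boldsymbol m}\rangle_t)\big(\langle\tilde m_N^\mu\tilde m_N^\nu\rangle_t-\langle\tilde m_N^\mu\rangle_t\langle\tilde m_N^\nu\rangle_t\big)$, so the product with the prefactor $\beta N$ converges to $\beta\sum_\nu\partial_\nu g(\langle\tilde{\boldsymbol m}\rangle_t)\,C^{\mu\nu}$ where $C^{\mu\nu}:=\lim_N N(\langle\tilde m_N^\mu\tilde m_N^\nu\rangle_t-\langle\tilde m_N^\mu\rangle_t\langle\tilde m_N^\nu\rangle_t)$ is finite a.e. by hypothesis. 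Here $g(\tilde{\boldsymbol m})=\sqrt{1+\tilde{\boldsymbol m}^2}-\sum_\nu\psi^\nu\tilde m^\nu$, so $\partial_\nu g(\langle\tilde{\boldsymbol m}\rangle_t)=\dfrac{\langle\tilde m^\nu\rangle_t}{\sqrt{1+\langle\tilde{\boldsymbol m}\rangle_t^2}}-\psi^\nu$. Hence
\begin{equation}\nonumber
\frac{d}{dt}\lim_{N\to\infty}\langle\tilde m_N^\mu\rangle_t=\beta\sum_{\nu=1}^P C^{\mu\nu}\Big(\frac{\langle\tilde m^\nu\rangle_t}{\sqrt{1+\langle\tilde{\boldsymbol m}\rangle_t^2}}-\psi^\nu\Big).
\end{equation}

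The conclusion then follows by the natural self-consistent choice of the a-priori-free fields: take $\psi^\nu=\psi^\nu(t)$ (or, since we ultimately want $t$-independence, a fixed point) defined by $\psi^\nu=\dfrac{\langle\tilde m^\nu\rangle_t}{\sqrt{1+\langle\tilde{\boldsymbol m}\rangle_t^2}}$, which makes every term in the sum vanish and gives $\frac{d}{dt}\lim_N\langle\tilde m_N^\mu\rangle_t=0$ a.e. — this is precisely the "definition given a posteriori" the text alludes to. I would remark that this is a genuine choice: since by this very lemma $\langle\tilde{\boldsymbol m}\rangle_t$ is then constant in $t$, one can consistently set $\psi^\nu$ to the common value $\langle\tilde m^\nu\rangle/\sqrt{1+\langle\tilde{\boldsymbol m}\rangle^2}$ evaluated at any convenient $t$ (e.g.\ $t=1$, matching the true model), and a standard fixed-point/continuity argument shows such a choice exists.

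The main obstacle is the interchange of limits and the control of the covariance term: one must justify that $\beta N\times(\text{covariance})$ actually converges to the claimed finite limit, which requires (i) the concentration in Remark 4 to hold in a strong enough sense that the higher-order terms in the Taylor expansion of $g$ — involving third and higher moments of $\tilde{\boldsymbol m}_N-\langle\tilde{\boldsymbol m}_N\rangle_t$ — are $o(1/N)$ after multiplication by $N$, and (ii) the exchange $\frac{d}{dt}\lim_N=\lim_N\frac{d}{dt}$, which is where the "almost everywhere" caveat enters (the magnetization can jump at phase transitions, a measure-zero set of $t$). Both are handled in the same spirit as the classical Guerra–Toninelli arguments for ferromagnetic/Hopfield-type models, so I would cite \cite{Guerra2,Guerra1} and confine the technical estimates to the appendix rather than reproduce them here.
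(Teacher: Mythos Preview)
Your proposal is correct and follows essentially the same route as the paper: compute $\frac{d}{dt}\langle\tilde m_N^\mu\rangle_t$ as a connected correlator, use the self-averaging hypothesis \eqref{eq:magnsus} to Taylor-expand the square root around $\langle\tilde{\boldsymbol m}\rangle_t$ so that the $N$-prefactor is killed by the $O(1/N)$ covariance, and then choose $\psi^\nu=\langle\tilde m^\nu\rangle/\sqrt{1+\langle\tilde{\boldsymbol m}\rangle^2}$ to annihilate the surviving term. The only notable difference is in how $\boldsymbol\psi$ is pinned down: the paper allows $\psi^\rho=\psi^\rho(t)$ from the outset, which adds an extra term $(1-t)(\psi^\rho)'$ in the derivative, and then solves the resulting ODE $(1-t)(\psi^\rho)'-\psi^\rho=-\tilde M_\rho/\sqrt{1+\tilde{\boldsymbol M}^2}$ to find that the admissible $\psi^\rho$ is in fact the constant $\tilde M_\rho/\sqrt{1+\tilde{\boldsymbol M}^2}$; you instead take $\boldsymbol\psi$ constant from the start and close the circularity with a fixed-point/self-consistency argument (the value at $t=1$ is fixed by the original model, and the constant trajectory solves the resulting ODE for $\tilde{\boldsymbol M}(t)$). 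Both land on the same $\boldsymbol\psi$ and the same conclusion; the paper's ODE route is slightly more systematic in that it does not presuppose constancy, while your version is more direct but leans on an ODE-uniqueness step you leave implicit.
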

\begin{proof}
The constraint \eqref{eq:magnsus} means that the Mattis magnetizations weakly fluctuate around their expectation values for sufficiently large $N$. Thus, we can adopt a formal expression for the Mattis magnetizations as follows:
\begin{equation}
\label{eq:exp}
{\tilde m}^{\mu}_N(\boldsymbol {\sigma})= \langle {\tilde m}_N ^{\mu}\rangle _t+\frac{\Delta_\mu (\boldsymbol{\sigma})}{\sqrt N}.
\end{equation}
Such decomposition is valid almost everywhere. As a direct consequence, we have that the random variables $\Delta_\mu (\boldsymbol{ \sigma})$ have zero mean and finite covariance. We now compute the $t$-derivative of the expectation value of the Mattis magnetization, leaving the possibility for $\psi$ to depend on $t$. By direct calculation we get
\begin{equation}\label{eq:derivative}
\begin{split}
\frac{d \langle \tilde{m}_N^\mu \rangle_{t} }{d t} &	=\beta N\Big\langle\tilde{m}_N^\mu \sqrt{1+ \tilde{\mathbf{m}}_N^2} \Big\rangle_t+\beta N\sum_{\rho=1}^P[-\psi^\rho+(1-t)(\psi^\rho)']\langle\tilde{m}^\rho_N\tilde{m}^\mu_N\rangle_t\\ &-\beta N\langle \tilde{m}_N^\mu \rangle_t\Big\langle\sqrt{1+ \tilde{\mathbf{m}}_N^2} \Big\rangle_t+\beta N\sum_{\rho=1}^P [-\psi^\rho+(1-t)(\psi^\rho)']\langle \tilde{m}_N^\rho \rangle_t\langle\tilde{m}^\mu_N\rangle_t,
\end{split}
\end{equation}
where $(\cdot)'$ stands for the $t$-derivative. In the remainder of the proof, we will drop the subscripts $N$ and $t$ in order to lighten the notation. Applying the expression \eqref{eq:exp}, we can expand the argument of the expectation values in the previous equality around the expectation value of the magnetizations up to the order $N^{-1}$ (because of the prefactor $N$ in \eqref{eq:derivative}). In particular, we have
\begin{equation*}
\begin{split}
\sqrt{1+ \tilde{\mathbf{m}}^2}&= \sqrt{1+\langle \tilde{\mathbf{m}}\rangle^2}+\sum _{\rho=1}^P \frac{\langle\tilde m_\rho \rangle}{\sqrt{1+ \langle\tilde{\mathbf{m}}\rangle^2}}(\tilde m_\rho - \langle \tilde m_\rho \rangle)\\&+\frac12\sum_{\rho,\sigma=1}^P\left(\frac{\delta_{\rho,\sigma}}{\sqrt{1+ \langle\tilde{\mathbf{m}}\rangle^2}}-\frac{\langle  \tilde m_\rho\rangle \langle \tilde m_\sigma\rangle }{({1+ \langle\tilde{\mathbf{m}}\rangle^2})^{3/2}}\right)(\tilde m_\rho - \langle\tilde  m_\rho \rangle)(\tilde m_\sigma - \langle\tilde  m_\sigma \rangle)+ \mathcal R_2(\tilde {\boldsymbol m})=
\\&= \sqrt{1+\langle \tilde{\mathbf{m}}\rangle^2}+\sum _{\rho=1}^P \frac{\langle\tilde  m_\rho \rangle}{\sqrt{1+ \langle\tilde{\mathbf{m}}\rangle^2}}\frac{\Delta_\rho}{\sqrt N}+\frac12\sum_{\rho,\sigma=1}^P\left(\frac{\delta_{\rho,\sigma}}{\sqrt{1+ \langle\tilde{\mathbf{m}}\rangle^2}}-\frac{\langle\tilde  m_\rho\rangle \langle\tilde  m_\sigma\rangle }{({1+ \langle\tilde{\mathbf{m}}\rangle^2})^{3/2}}\right)\frac{\Delta _\rho \Delta_ \sigma}{N}+ \mathcal R_2(\tilde {\boldsymbol m}),
\end{split}
\end{equation*}
where
\begin{eqnarray*}
\mathcal R_2 (\tilde {\boldsymbol m})&=&\frac{1}{3!}\sum_{\rho,\sigma,\eta} \Bigg(\frac{\delta_{\rho,\sigma} H_\eta}{\sqrt{1+\sum_\gamma  H_\gamma ^2 }}-\frac{\delta_{\rho,\eta}  H_\sigma+\delta_{\sigma,\eta} H_\rho}{({1+\sum_\gamma  H_\gamma ^2 })^{3/2}}+3\frac{  H_\rho  H_\sigma  H_\eta}{({1+\sum_\gamma  H_\gamma ^2 })^{5/2}}\Bigg)\frac{\Delta_\rho\Delta_\sigma\Delta_\eta}{N^{3/2}},\\
\boldsymbol H& = &\langle \tilde{\boldsymbol{m}}\rangle+x \boldsymbol \Delta,\quad \text{for some } x \in(0,1),
\end{eqnarray*}
is the Lagrange remainder of order 2. It is easy to prove that
\begin{equation*}
\left\vert\frac{\delta_{\rho,\sigma} H_\eta}{\sqrt{1+\sum_\gamma  H_\gamma ^2 }}-\frac{\delta_{\rho,\eta}  H_\sigma+\delta_{\sigma,\eta} H_\rho}{({1+\sum_\gamma  H_\gamma ^2 })^{3/2}}+3\frac{  H_\rho  H_\sigma  H_\eta}{({1+\sum_\gamma  H_\gamma ^2 })^{5/2}}\right\vert \le 6 \underset{\rho,t\in(0,1)}{\text{max}}\vert H_\rho \vert= 6C,
\end{equation*}
thus
\begin{equation*}
\vert \mathcal R_2 (\tilde {\boldsymbol m}) \vert \le \frac{C}{N^{3/2}} \left(\sum_{\rho=1}^P \vert \Delta_\rho \vert \right)^3,
\end{equation*}
and therefore the remainder scales as $N^{-3/2}$. Using this result, we can compute each expectation value appearing in \eqref{eq:derivative}. In fact,
\begin{equation*}
\begin{split}
\Big\langle\tilde{m}_\mu& \sqrt{1+ \tilde{\mathbf{m}}^2} \Big\rangle=\Big\langle\big( \langle {\tilde m}_{\mu}\rangle +\frac{\Delta_\mu }{\sqrt N}\big) \Big[\sqrt{1+\langle \tilde{\mathbf{m}}\rangle^2}+\sum _{\rho=1}^P \frac{\langle\tilde  m_\rho \rangle}{\sqrt{1+ \langle\tilde{\mathbf{m}}\rangle^2}}\frac{\Delta_\rho}{\sqrt N}\\&+\frac12\sum_{\rho,\sigma=1}^P\left(\frac{\delta_{\rho,\sigma}}{\sqrt{1+ \langle\tilde{\mathbf{m}}\rangle^2}}-\frac{\langle \tilde m_\rho\rangle \langle\tilde  m_\sigma\rangle }{({1+ \langle\tilde{\mathbf{m}}\rangle^2})^{3/2}}\right)\frac{\Delta _\rho \Delta_ \sigma}{N}+ \mathcal R_2(\tilde{\boldsymbol{m}})\Big] \Big\rangle=\\
&=
\langle \tilde m_\mu\rangle \sqrt{1+\langle \tilde{\mathbf{m}}\rangle^2}+\frac{\langle\tilde  m_\mu \rangle}2\sum_{\rho,\sigma}\left(\frac{\delta_{\rho,\sigma}}{\sqrt{1+ \langle\tilde{\mathbf{m}}\rangle^2}}-\frac{\langle \tilde m_\rho\rangle \langle\tilde  m_\sigma\rangle }{({1+ \langle\tilde{\mathbf{m}}\rangle^2})^{3/2}}\right)\frac{\langle\Delta _\rho \Delta_ \sigma\rangle}{N}\\&+
\sum _{\rho} \frac{\langle\tilde  m_\rho \rangle}{\sqrt{1+ \langle\tilde{\mathbf{m}}\rangle^2}}\frac{\langle\Delta_\mu \Delta_\rho\rangle}{N}+\langle \mathcal Q_1^\mu(\tilde{\boldsymbol{m}})\rangle.
\end{split}
\end{equation*}
Here, we used the fact that $\langle \Delta _\mu \rangle =0$ and defined the quantity
\begin{equation}
Q_1^\mu(\tilde{\boldsymbol{m}})=\Big( \langle {\tilde m}_{\mu}\rangle +\frac{\Delta_\mu }{\sqrt N}\Big)\mathcal R_2(\tilde{\boldsymbol{m}})+\frac12\sum_{\rho,\sigma=1}^P\left(\frac{\delta_{\rho,\sigma}}{\sqrt{1+ \langle\tilde{\mathbf{m}}\rangle^2}}-\frac{\langle \tilde m_\rho\rangle \langle\tilde  m_\sigma\rangle }{({1+ \langle\tilde{\mathbf{m}}\rangle^2})^{3/2}}\right)\frac{\Delta_\mu \Delta _\rho \Delta_ \sigma}{N^{3/2}},
\end{equation}
accounting for all the contributions which scale at least as $N^{-3/2}$. In the same fashion, we have
\begin{equation*}
\begin{split}
\langle\tilde{m}_\mu\rangle&\langle \sqrt{1+ \tilde{\mathbf{m}}^2} \rangle=
\langle \tilde m_\mu\rangle \sqrt{1+\langle \tilde{\mathbf{m}}\rangle^2}+\frac{\langle\tilde  m_\mu \rangle}2\sum_{\rho,\sigma}\left(\frac{\delta_{\rho,\sigma}}{\sqrt{1+ \langle\tilde{\mathbf{m}}\rangle^2}}-\frac{\langle \tilde m_\rho\rangle \langle\tilde  m_\sigma\rangle }{({1+ \langle\tilde{\mathbf{m}}\rangle^2})^{3/2}}\right)\frac{\langle\Delta _\rho \Delta_ \sigma\rangle}{N}+\langle Q_2^{\mu}(\tilde{\boldsymbol{m}})\rangle,
\end{split}
\end{equation*}
and
\begin{equation*}
\begin{split}
\langle \tilde m_\mu\tilde m_\rho \rangle=\langle\tilde m_\mu\rangle \langle\tilde m_\rho \rangle+\frac{\langle \Delta_\mu \Delta_\rho\rangle}{N}+\langle \mathcal Q_3^{\mu}(\tilde{\boldsymbol{m}})\rangle.
\end{split}
\end{equation*}
The functions $Q_2 ^\mu $ and $Q_3 ^\mu$ are defined in order to incorporate all the subleading contributions in $N$. Calling $Q^{\mu}(\tilde{\boldsymbol{m}})= Q_1^{\mu}(\tilde{\boldsymbol{m}})-Q_2^{\mu}(\tilde{\boldsymbol{m}})-Q_3^{\mu}(\tilde{\boldsymbol{m}})$, we can finally recast \eqref{eq:derivative} as
\begin{equation}
\label{eq:master1}
\frac{d \langle \tilde{m}_\mu \rangle }{d t} =\beta \sum_\rho \Big(\frac{\langle \tilde m_\rho \rangle}{\sqrt{1+\langle \tilde{\mathbf{m}}\rangle^2}}-\psi^\rho+(1-t)(\psi^\rho)'\Big)\langle \Delta_\mu \Delta_\rho\rangle+N \langle Q^{\mu}(\tilde{\boldsymbol{m}}) \rangle.
\end{equation}
Since $Q^{\mu}(\tilde{\boldsymbol{m}})$ scales as $N^{-3/2}$, the second contribution is subleading (and vanishes in the $N\to\infty$ limit). We now call $\boldsymbol{\tilde{M}} := \lim_{N \to \infty}\langle \mathbf{\tilde{m}}_N \rangle_t$ the thermodynamic value of the global magnetization; notice that at this stage we still allow $\boldsymbol{\tilde{M}}$ to depend on $t$. Clearly, the r.h.s. of Eq. \eqref{eq:master1} is well-defined almost everywhere in the thermodynamic limit (recall that the $\Delta_\mu$ variables have finite covariance). Thus
\begin{equation}
\label{eq:master2}
\lim_{N\to\infty}\frac{d \langle \tilde{m}_\mu \rangle }{d t} =\beta \sum_\rho \Big(\frac{ \tilde M_\rho }{\sqrt{1+ \tilde{\mathbf{M}}^2}}-\psi^\rho+(1-t)(\psi^\rho)'\Big)\langle \Delta_\mu \Delta_\rho\rangle_\infty, \quad {a.e.}
\end{equation}
where the subscript $\infty$ is used to stress that the expectation value is evaluated in the thermodynamic limit. In other words, the series of the $t$-derivatives of the generalized-average magnetization converges to the r.h.s. of the previous equation almost everywhere, thus, invoking Egorov's theorem, the sequence is almost uniformly convergent. As a consequence, we have
$$   \lim_{N \to \infty }\frac{d \langle \tilde{m}^\mu_N \rangle _t}{d t} =\frac{d  }{d t}  \lim_{N \to \infty }\langle\tilde{m}^\mu_N \rangle_t \quad {a.e.} ,$$ thus Eq. \eqref{eq:master2} becomes
\begin{equation}
\frac{d \tilde M_\mu}{dt}=\beta \sum_\rho \Big(\frac{ \tilde M_\rho }{\sqrt{1+ \tilde{\mathbf{M}}^2}}-\psi^\rho+(1-t)(\psi^\rho)'\Big)\langle \Delta_\mu \Delta_\rho\rangle_\infty, \quad {a.e.}
\end{equation}
The function $\boldsymbol \psi$ can be chosen in order to ensure the quantity in round brackets to vanish, which would imply $d\tilde M_\mu /dt=0$ for all $\mu=1,\dots,P$. This implies
\begin{equation*}
(1-t)(\psi^\rho)' -\psi^\rho = -\frac{M_\rho}{\sqrt{1+\tilde{\mathbf{M}}^2}}.
\end{equation*}
Because of our choice, the r.h.s. is independent on $t$. The solution of this differential equation can be easily found recalling that $(1-t)(\psi^\rho)' -\psi^\rho=[(1-t)\psi^\rho]'$, thus
$$
(1-t)\psi^\rho = -\frac{M_\rho}{\sqrt{1+\tilde{\mathbf{M}}^2}}t+ c_\rho,
$$
or in a more transparent form
$$
\psi^\rho =\frac{M_\rho}{\sqrt{1+\tilde{\mathbf{M}}^2}}\frac{c'_\rho-t}{1-t},
$$
where we redefined
$$
c_\rho=\frac{M_\rho}{\sqrt{1+\tilde{\mathbf{M}}^2}}c'_\rho.
$$
Now, since $\boldsymbol \psi$ in the interpolating model always appears through the product $(1-t)\psi^\rho$ and since we have to recover the original model \eqref{original_model}, one has that $c'_\rho = 1$ for all $\rho=1,\dots,P$, therefore leaving us only with
\begin{equation} \label{eq:psimu}
\psi^\rho= \frac{\tilde M_\rho}{\sqrt{1+ \tilde{\mathbf{M}}^2}}.
\end{equation}
This proves our assertion.
%
\end{proof}
We can now state the following \medskip
\begin{Theorem}
In the thermodynamic limit and under the self-average properties of the order parameters, the intensive pressure of the ``relativistic'' Hopfield model with correlated patterns
 (\ref{Hopfield-relativistico}) can be written in terms of the $P$ Mattis magnetizations as
\begin{equation}
\label{FreeFinal}
F_{\beta, a}=\log 2 +  \mathbb E \log \cosh \left(\beta 
\frac{ \sum_\mu \xi^\mu (\boldsymbol{ X} \boldsymbol{M})_\mu}{\sqrt{1+  \boldsymbol{M}^T\boldsymbol{ X}\boldsymbol{M} }} \right) + \frac{\beta}{\sqrt{1+  \boldsymbol{M}^T\boldsymbol{ X}\boldsymbol{M}}}.
\end{equation}
The associated self-consistency equations read
\begin{equation}
\label{finale}
M_{\mu}  =\frac{(1+\boldsymbol{M}^T\boldsymbol{X} \boldsymbol{M})\mathbb E \,{\xi}^{\mu} \tanh\Big( \beta \frac{\sum_\rho {\xi}^\rho  (\boldsymbol{X} \boldsymbol{ M})_\rho}{\sqrt{1+\boldsymbol{M}^T\boldsymbol{X} \boldsymbol{M}}} \Big)}{1+\sum_\nu (\boldsymbol M^T \boldsymbol X)_{\nu }\mathbb E \,{\xi}^{\nu} \tanh\Big( \beta \frac{\sum_\rho {\xi}^\rho  (\boldsymbol{X} \boldsymbol{ M})_\rho}{\sqrt{1+\boldsymbol{M}^T\boldsymbol{X} \boldsymbol{M}}} \Big)} , \quad \ \forall \mu=1,...,P.
\end{equation}
\end{Theorem}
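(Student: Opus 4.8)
The plan is to run Guerra's interpolation: evaluate the interpolating pressure $\bar F_N(t)$ of \eqref{alpha-intes} at its two endpoints and join them through the fundamental theorem of calculus \eqref{nuova}, after fixing the auxiliary fields as prescribed by Lemma \ref{prop:constant}, namely $\psi^\rho=\tilde M_\rho/\sqrt{1+\tilde{\boldsymbol M}^2}$ (cf. \eqref{eq:psimu}) with $\tilde{\boldsymbol M}:=\lim_{N\to\infty}\langle\tilde{\boldsymbol m}_N\rangle_t$, which by that Lemma is $t$-independent almost everywhere. Existence of $\lim_{N\to\infty}\bar F_N(1)=F_{\beta,a}$ is already guaranteed by the sub-additivity argument of Appendix \ref{sec:esistenza}, so every limit below is legitimate. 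All the computations are first carried out in the rotated variables $\tilde{\boldsymbol\xi}=\sqrt{\boldsymbol D}\boldsymbol U^{T}\boldsymbol\xi$ and then translated back to the original ones using $\tilde{\boldsymbol m}=\sqrt{\boldsymbol D}\boldsymbol U^{T}\boldsymbol m$, which gives $\tilde{\boldsymbol M}^2=\boldsymbol M^{T}\boldsymbol X\boldsymbol M$ and, by the symmetry of $\boldsymbol X$, $\sum_\mu\psi^\mu\tilde\xi^\mu=\boldsymbol M^{T}\boldsymbol X\boldsymbol\xi/\sqrt{1+\boldsymbol M^{T}\boldsymbol X\boldsymbol M}=\sum_\mu\xi^\mu(\boldsymbol X\boldsymbol M)_\mu/\sqrt{1+\boldsymbol M^{T}\boldsymbol X\boldsymbol M}$.

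\textbf{One-body endpoint.} At $t=0$ the sum over $\boldsymbol\sigma$ factorizes site by site, so $\bar F_N(0)=\log 2+\frac1N\sum_{i=1}^{N}\log\cosh\!\big(\beta\sum_\mu\psi^\mu\tilde\xi_i^\mu\big)$. Since the rows $(\tilde\xi_i^1,\dots,\tilde\xi_i^P)$ are i.i.d.\ functions of the i.i.d.\ rows $(\xi_i^1,\dots,\xi_i^P)$, the strong law of large numbers yields $\lim_{N\to\infty}\bar F_N(0)=\log 2+\mathbb E\log\cosh\!\big(\beta\sum_\mu\psi^\mu\tilde\xi^\mu\big)$, and the back-rotation above turns this into exactly the first two terms of \eqref{FreeFinal}.

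\textbf{Integral term.} Differentiating \eqref{alpha-intes} (and keeping the possible $t$-dependence of $\boldsymbol\psi$) gives
\[
\frac{d\bar F_N(t)}{dt}\;=\;\beta\,\big\langle\sqrt{1+\tilde{\boldsymbol m}_N^{2}}\big\rangle_t\;-\;\beta\sum_{\mu=1}^{P}\big[\psi^\mu-(1-t)(\psi^\mu)'\big]\,\langle\tilde m_N^\mu\rangle_t .
\]
By the choice of $\boldsymbol\psi$ made in the proof of Lemma \ref{prop:constant} one has $\psi^\mu-(1-t)(\psi^\mu)'=\tilde M_\mu/\sqrt{1+\tilde{\boldsymbol M}^2}$, and by the self-averaging hypothesis of Remark \ref{rem:selfaver} the generalized averages concentrate, $\langle\tilde m_N^\mu\rangle_t\to\tilde M_\mu$ and $\langle\sqrt{1+\tilde{\boldsymbol m}_N^2}\rangle_t\to\sqrt{1+\tilde{\boldsymbol M}^2}$; hence $\lim_{N\to\infty}d\bar F_N(t)/dt=\beta\sqrt{1+\tilde{\boldsymbol M}^2}-\beta\tilde{\boldsymbol M}^2/\sqrt{1+\tilde{\boldsymbol M}^2}=\beta/\sqrt{1+\tilde{\boldsymbol M}^2}$, which is $t$-independent by Lemma \ref{prop:constant}. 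Thus $\int_0^1 dt'\,\lim_N d\bar F_N/dt|_{t'}=\beta/\sqrt{1+\boldsymbol M^{T}\boldsymbol X\boldsymbol M}$, and adding the $t=0$ term produces \eqref{FreeFinal}. The delicate points here are the interchange $\lim_N d/dt=d/dt\lim_N$ and the concentration of the generalized averages; both are handled exactly as in the proof of Lemma \ref{prop:constant} (Egorov's theorem together with Remark \ref{rem:selfaver}), and this control is, in my view, the main obstacle of the argument.

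\textbf{Self-consistency equations.} Since in the thermodynamic limit the Gibbs measure concentrates on maxima of the pressure, the order parameter solves the stationarity conditions $\partial F_{\beta,a}/\partial M_\mu=0$ of \eqref{FreeFinal}. Writing $g(\boldsymbol M,\boldsymbol\xi):=\boldsymbol\xi^{T}\boldsymbol X\boldsymbol M/\sqrt{1+\boldsymbol M^{T}\boldsymbol X\boldsymbol M}$ and differentiating, the explicit $\beta/\sqrt{1+\boldsymbol M^{T}\boldsymbol X\boldsymbol M}$ term cancels part of the $\mathbb E\log\cosh$ contribution, and (using symmetry of $\boldsymbol X$) one is left with
\[
(\boldsymbol X\boldsymbol A)_\mu\;=\;\frac{(\boldsymbol X\boldsymbol M)_\mu}{1+\boldsymbol M^{T}\boldsymbol X\boldsymbol M}\Big(1+\sum_{\nu}(\boldsymbol X\boldsymbol M)_\nu A_\nu\Big),\qquad A_\nu:=\mathbb E\big[\xi^\nu\tanh\!\big(\beta g(\boldsymbol M,\boldsymbol\xi)\big)\big].
\]
Invertibility of $\boldsymbol X$ — which holds for generic $a\in[0,1]$, its eigenvalues being $1+2a\cos(2\pi k/P)$ — then lets us strip off $\boldsymbol X$ and rearrange into \eqref{finale}. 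The only remaining bookkeeping subtlety is to carry the orthogonal rotation consistently through all the algebra, in particular the interpretation of $\sqrt{\boldsymbol D}$ when $\boldsymbol X$ has non-positive eigenvalues (alternatively, one works directly with $\boldsymbol X$ and never extracts $\sqrt{\boldsymbol D}$).
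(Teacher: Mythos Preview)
Your proof is correct and follows essentially the same Guerra-interpolation route as the paper: the one-body endpoint, the $t$-derivative with the choice \eqref{eq:psimu}, and the back-rotation all match. The only notable variation is in deriving the self-consistency equations: the paper extremizes the pressure \eqref{free-final} directly in the \emph{rotated} variables $\tilde{\boldsymbol M}$ and then rotates back, whereas you differentiate \eqref{FreeFinal} in the original $\boldsymbol M$ and invoke invertibility of $\boldsymbol X$ to cancel it from both sides---the paper's route sidesteps that invertibility hypothesis (which fails when some eigenvalue $1+2a\cos(2\pi k/P)$ vanishes), but otherwise the two derivations are equivalent.
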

\begin{proof}
As anticipated, the strategy is to get the pressure of the model (\ref{Hopfield-relativistico}), by exploiting (\ref{nuova}). As a first step, we evaluate the interpolating pressure (\ref{alpha-intes}) at $t=0$; this can be done straightforwardly as it corresponds to a one-body system:
%
\begin{eqnarray}
\nonumber
\bar{F}_N(0) &=& \frac{1}{N} \log \Big[   \sum_{ \{ \boldsymbol \sigma \}} \exp\Big(\beta \sum_{\mu=1}^{P} \psi^{\mu} \sum_{i=1}^{N} \tilde{\xi}_i^{\mu}\sigma_i \Big) \Big]\\
 & =& \log{2} + \frac{1}{N} \sum_{i=1}^{N} \log{ \cosh \Big (\beta \sum_{\mu=1}^{P} 
 \psi^{\mu} \tilde{\xi_i}^{\mu} \Big) }\label{Cauchy}\\
&=&
\nonumber
\log{2} + \Big \langle \log\cosh\Big( \beta \sum_{\mu=1}^{P}\psi^{\mu}{\tilde{\xi}^{\mu}}_{i} \Big) \Big \rangle_{\tilde{\boldsymbol{\xi}}},
\end{eqnarray}
where $\langle  \cdot  \rangle_{\boldsymbol {\tilde \xi } }$ represents the empirical average over the $\boldsymbol{ \tilde \xi}$ and, as $N \to \infty$, it can be replaced by the expectation $\mathbb{E}_{\tilde {\boldsymbol \xi}}(\cdot)$. Next, let us move to the calculation of the $t$-derivative, that is,
\begin{equation} 
\frac{d\bar{F}_N(t)}{dt} 
= 
\beta \Big \langle  \sqrt{1+ \tilde{\mathbf{m}}_N^2}- \sum_{\mu=1}^{P}\psi^{\mu} \tilde{m}_N^\mu \Big \rangle_{t}.
\label{punto}
\end{equation}
In the thermodynamic limit, the generalized measure concentrates and, in particular, for almost every value of $\beta \in \mathbb{R}^+$ \cite{Guerra1}
\begin{equation}
    \label{punto1}
\lim_{N \to \infty}  \frac{d\bar{F}_N(t)}{dt} = \frac{d\bar{F}(t)}{dt} =
\beta \Big(  \sqrt{1+  \tilde{\boldsymbol {M}} ^2}- \sum_{\mu=1}^{P}\psi^{\mu}  \tilde{M}_\mu  \Big ).
\end{equation}
Here, we stress that, due to the Lemma \ref{prop:constant}, we directly used the equilibrium value of the order parameters, which are $t$-independent, and the integral in \eqref{nuova} therefore turns out to be trivial.
%
%
%
%
Recalling the expression for $\psi^{\mu}$ in \eqref{eq:psimu}, we get
\begin{equation} \label{delta}
\frac{d \bar{F}(t)}{dt} = \frac{\beta}{\sqrt{1+\boldsymbol{\tilde{M}}^2}}.
\end{equation}
By plugging \eqref{Cauchy} and \eqref{delta}  into \eqref{nuova}, we get the intensive pressure in terms of the rotated order parameters: 
\begin{equation}
\label{free-final}
F(\boldsymbol{\tilde M})=\log 2 +\mathbb{E}_{\tilde {\boldsymbol \xi}}  \log \cosh \Bigg(\beta \tilde{\boldsymbol{\xi}}^T\cdot
\frac{ \boldsymbol{\tilde{M}}}{\sqrt{1+  \boldsymbol{\tilde{M}}^2 }} \Bigg) + \frac{\beta}{\sqrt{1+  \boldsymbol{\tilde{M}}^2}}.
\end{equation}
The expression \eqref{FreeFinal} can be obtained by rotating back the order parameters in the original space. Since the extremization of the intensive pressure w.r.t. the real Mattis magnetizations $\boldsymbol M$ is equivalent to the extremization w.r.t. to the rotated ones, we can directly impose the extremality condition on the \eqref{free-final}, thus obtaining the condition
\begin{equation}
\tilde M_\mu=\mathbb E_{\tilde{\boldsymbol \xi}} \left \{ \tanh \left (\beta \frac{ \tilde{\boldsymbol{\xi}}^T\cdot \tilde{\boldsymbol{M}}}{ \sqrt{1 + \tilde{\boldsymbol{M}^2}}} \right) \left [\tilde{\xi}^{\mu} \left(1 + \tilde{\boldsymbol{M}^2} \right) - \tilde{M}_{\mu} \left( \tilde{\boldsymbol{\xi}}^T \cdot \tilde{\boldsymbol{M}} \right) \right]\right \}.
 \end{equation}
We stress that the value of $\tilde M_\mu$ does not depend on the specific realization of the digital patterns $\xi^\mu$, since the r.h.s. of the previous equation is averaged over the quenched noise. Due to this fact, we can rearrange the r.h.s. as
\begin{equation}
\begin{split}
\tilde M_\mu = &\mathbb E_{\tilde{\boldsymbol \xi}} \left(\tanh \Bigg(\beta \frac{ \tilde{\boldsymbol{\xi}}^T\cdot \tilde{\boldsymbol{M}}}{ \sqrt{1 + \tilde{\boldsymbol{M}^2}}} \Bigg) [\tilde{\xi}^{\mu} (1 + \tilde{\boldsymbol{M}^2}) - \tilde{M}_{\mu}( \tilde{\boldsymbol{\xi}}^T \cdot \tilde{\boldsymbol{M}}) ]\right)=\\=&\,(1+\tilde{\boldsymbol M}^2)\mathbb E_{\tilde{\boldsymbol \xi}}\, \tilde \xi^\mu\tanh \Bigg(\beta \frac{ \tilde{\boldsymbol{\xi}}^T\cdot \tilde{\boldsymbol{M}}}{ \sqrt{1 + \tilde{\boldsymbol{M}^2}}} \Bigg)-\tilde M_\mu \,\sum_\nu \tilde M_\nu\mathbb E_{\tilde{\boldsymbol \xi}}\, \tilde \xi^\nu\tanh \Bigg(\beta \frac{ \tilde{\boldsymbol{\xi}}^T\cdot \tilde{\boldsymbol{M}}}{ \sqrt{1 + \tilde{\boldsymbol{M}^2}}} \Bigg).
\end{split}
\end{equation}
Further, moving the second term in the r.h.s. to the l.h.s. and collecting $\tilde M_\mu$, we obtain
\begin{equation}
\tilde M_\mu=\frac{(1+\tilde{\boldsymbol M}^2)\mathbb E_{\tilde{\boldsymbol \xi}}\, \tilde \xi^\mu\tanh \Big(\beta \frac{ \sum_\rho \tilde{{\xi}}_\rho \tilde{{M}}_\rho}{ \sqrt{1 + \tilde{\boldsymbol{M}^2}}} \Big)}{1+\sum_\nu \tilde M_\nu\mathbb E_{\tilde{\boldsymbol \xi}}\, \tilde \xi^\nu\tanh \Big(\beta \frac{ \sum_\rho \tilde{{\xi}}_\rho \tilde{{M}}_\rho}{ \sqrt{1 + \tilde{\boldsymbol{M}^2}}} \Big)}.
\end{equation}
Then, (\ref{finale}) follows by rotating back the order parameters in the original pattern space.
\end{proof}
We stress that, since the l.h.s. of \eqref{FreeFinal} does not depend on the specific realization of the digital patterns $\boldsymbol \xi$, we get a further proof of Proposition \ref{prop1}.

\section{Numerical solution and phase diagrams}\label{sec:numerics}

In this Section, we report the numerical solutions for the self-consistency equations \eqref{finale}. To this aim, we fix $P$ and $a\in[0,1]$ and solve the self-consistency equations as a function of $T=\beta^{-1}$ with a fixed-point iteration method. 
The solutions obtained by setting as initial configuration a retrieval state (i.e. $\boldsymbol{M}^T=(1,0,..,0)$, without loss of generality), and by tuning the parameters $T$ and $a$ can be seen in the left panel of Figure \ref{SCsol_Fmin}. 
\begin{figure}[!ht]
    \centering
    \includegraphics[width=1.0\textwidth]{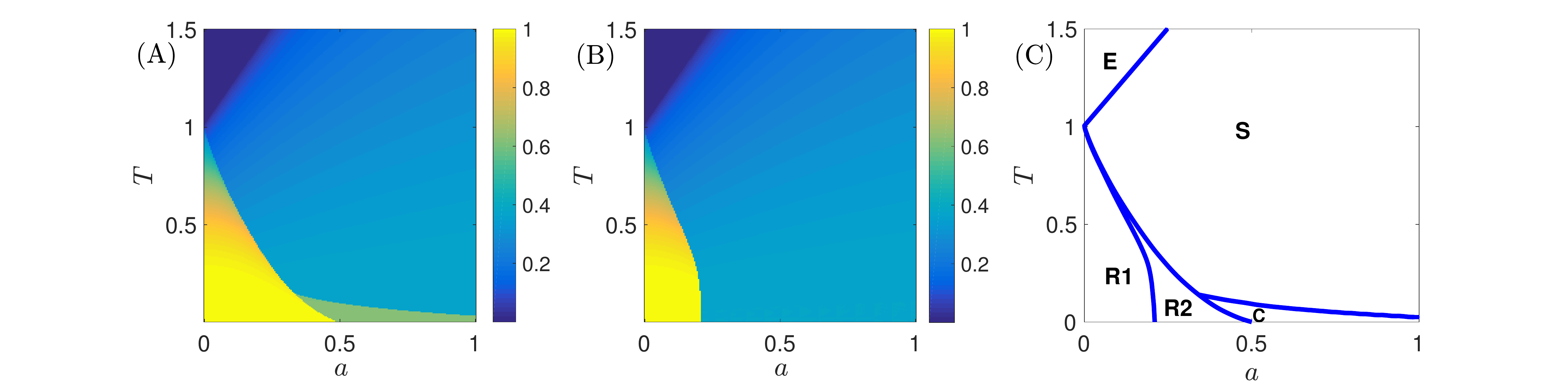}
    \caption{Panel (A): numerical solution of the self-consistency equation for $P=5$ and using the state $\boldsymbol{M}^T=(1,0,..,0)$ as a starting point. Panel (B): numerical solution of the self-consistency equation for $P=5$ obtained by considering different starting states and selecting, among the related solutions, the one corresponding to the largest pressure. The heat map is realized by considering the value of the largest Mattis magnetization. Panel (C): complete phase diagram for the correlated model obtained from the two previous panels. (E) represents the ergodic phase where $M_\mu=0$ $\forall \mu=1;..,P$, (S) is the spin-glass phase where $M_\mu=m\ne 0$ $\forall \mu=1,..,P$ and (R1) and (R2) are the recall phases of the model. In particular (R1) corresponds to the region where the pure states maximize the pressure while in (R2) the pure states are local pressure maxima.}
    \label{SCsol_Fmin}
\end{figure}
By inspecting this plot, we can notice the existence of different regions, analogously to the case of the classical Hopfield model with correlated patterns (Fig. \ref{phasediagram}):
 \begin{itemize}
  \item For high level of noise the system is ergotic (E) and the only stable solution is the one where any entry of the magnetization vanishes, that is $\boldsymbol{M}=\mathbf{0}$;
  \item At smaller temperature values the system evolves to the symmetrical phase (S) in which the solution is of the form $\boldsymbol{M}^T=(m,m,m,...,m)$ with $ \ m \neq 0$; 
  \item Decreasing the temperature and taking $a$ small enough, the system enters into the retrieval phase (R) where the stable solution has only one non-zero component corresponding to the retrieved pattern; 
  \item For low temperature and large values of the parameter $a$, we can see a hierarchical phase (C) where the solution displays several non-vanishing magnetizations. More precisely, there is one overlap (which is used as initial stimulus) with maximum value, while the remaining magnetizations decay symmetrically in the distance w.r.t. to the initial stimulus, until they vanish. In particular, we note that overlaps that are at the same distance w.r.t. the starting signal have the same value.
\end{itemize}

Now, in order to get an overall picture of the system behavior we proceed with the construction of its phase diagram. To this aim, we solve numerically Eq.~\eqref{finale} for different initial configurations and then, for the related solutions we compute the pressure. Then, selecting as solutions those for which the pressure is maximal, we obtain Fig.~\ref{SCsol_Fmin} (central panel). \\ 
\\By comparing the panels (A) and (B) of Fig.~\ref{SCsol_Fmin} we get panel (C). In particular, we can see that the retrieval region observed for values of $T$ and $a$ relatively small can be further split in a pure retrieval region (R1) where pure states are global maxima for the intensive energy, and in a mixed retrieval region (R2) where pure states are local minima, yet their attraction basin is large enough for the system to end there if properly stimulated. In these two regions, the network behaves like a Hopfield network and the patterns can be recovered. On the other hand, even if the temperature $T$ remains low, for values $a>0.5$ the pure state regime is no longer achievable.

Now, if we look at the figure more carefully, we can observe that the ergodic phase occurs beyond a certain temperature $T_c(a)$, whose value increases with the correlation parameter $a$. Just below this critical temperature the system enters the symmetrical phase where the effect of the temporal correlation is strong enough for each pattern to align the same fraction of neurons. 
We can analytically determine the transition line (see Appendix \ref{sec:Tc_line} for the analytical derivation) as
\begin{equation}\label{Tc1}
T_c(a)=1+2a
\end{equation}
by Taylor expanding the right hand side of eq.~\eqref{finale}, recalling that in the ergodic phase $\boldsymbol M = \boldsymbol 0$. Eq.~\eqref{Tc1} indicates a continuous transition to the fully symmetric phase, where there is no significant alignment of the spins in the direction of one particular pattern, but still a certain degree of local freezing. We can therefore state that  if $T>T_c$ the only solution is $\boldsymbol{M}=\boldsymbol{0}$ while, if $T<T_c$ there exist solutions $\boldsymbol{M}\ne\boldsymbol{0}$.
The line $T_c$ analytically found is consistent with the numerical solution of the Eq. \eqref{finale}.

\begin{figure}[tb]
    \centering
    \includegraphics[width=1.0\textwidth]{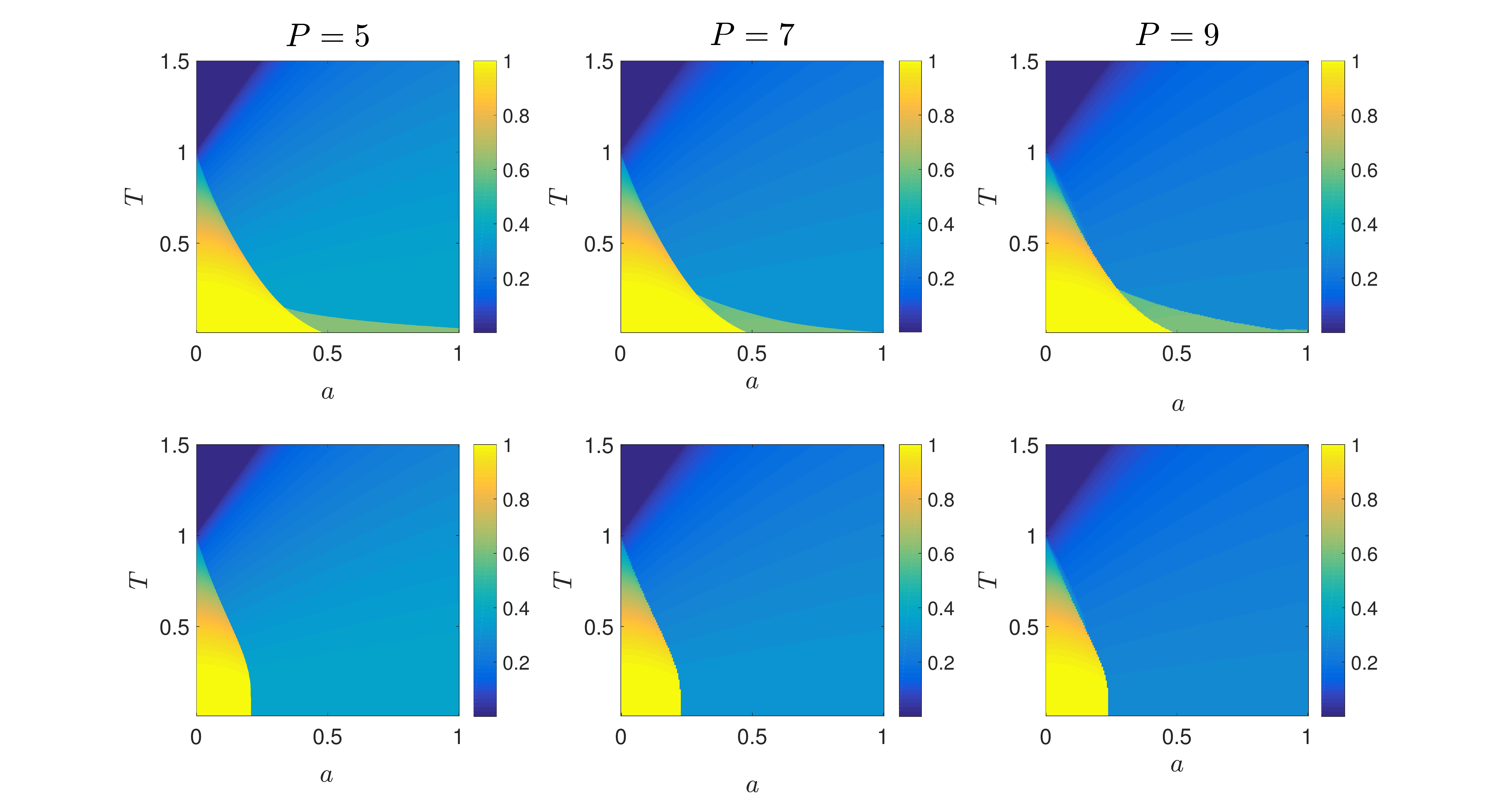}
    \caption{First line: solutions of the self-consistency equation obtained by taking as initial condition the pure state, as function of $T$ and $a$. Second line: selected solutions corresponding to the maximum of the intensive pressure. In both cases different values of $P$ ($P=5,7,9$) are considered. The heat map is realized by considering the value of the largest Mattis magnetization.}
    \label{P579}
\end{figure}

We now focus on the parameter $P$. In Fig.~\ref{P579} we show (in the first line) the solutions of the self-consistency equations for different values of $P$ ($P=5,7,9$) when the initial state is the pure configuration $\boldsymbol{M}^T=(1,0,...,0)$ and when different initial states are tested (second line) with the solution giving the maximal pressure. From this picture, we can notice that, as $P$ increases, the correlated (C) and the retrieval (R1) phases get wider.
Let also consider the following initial states:
\newline  
\\$\boldsymbol{M}^T=\frac{1}{2}(1,1,1),\quad \mbox{if } P=3$
\\
\\$\boldsymbol{M}^T=\frac{1}{8}(5,3,1,1,3),\quad \mbox{if } P=5$
\\
\\$\boldsymbol{M}^T=\frac{1}{32}(19,13,3,1,1,3,13),\quad \mbox{if } P=7$
\\
\\$\boldsymbol{M}^T=\frac{1}{128}(77,51,13,3,1,1,3,13,51),\quad \mbox{if } P=9.$

These vectors correspond to the states in the correlated region where the Mattis magnetizations reach a hierarchical structure in which the largest element of the vector, for example $M_1$, corresponds to the initial stimulus and the other elements are symmetrically decreasing i.e. $M_1\ge M_2=M_P\ge M_3=M_{P-1}\ge...\ge M_{\frac{P+1}{2}}=M_{\left(\frac{P+1}{2}\right)+1}$.
If we consider these configurations as initial conditions for solving Eqs. \eqref{finale} for varying $a$ and $T$, we get the Figure \ref{P579_M0corr}.
It can be seen that, for each value of $P$, part of the retrieval region is absorbed by the correlated one which becomes more and more predominant as $P$ increases. This means that starting from such a configuration, retrieval gets harder and harder. 
 \begin{figure}[tb]
    \centering
    \includegraphics[width=1.0\textwidth]{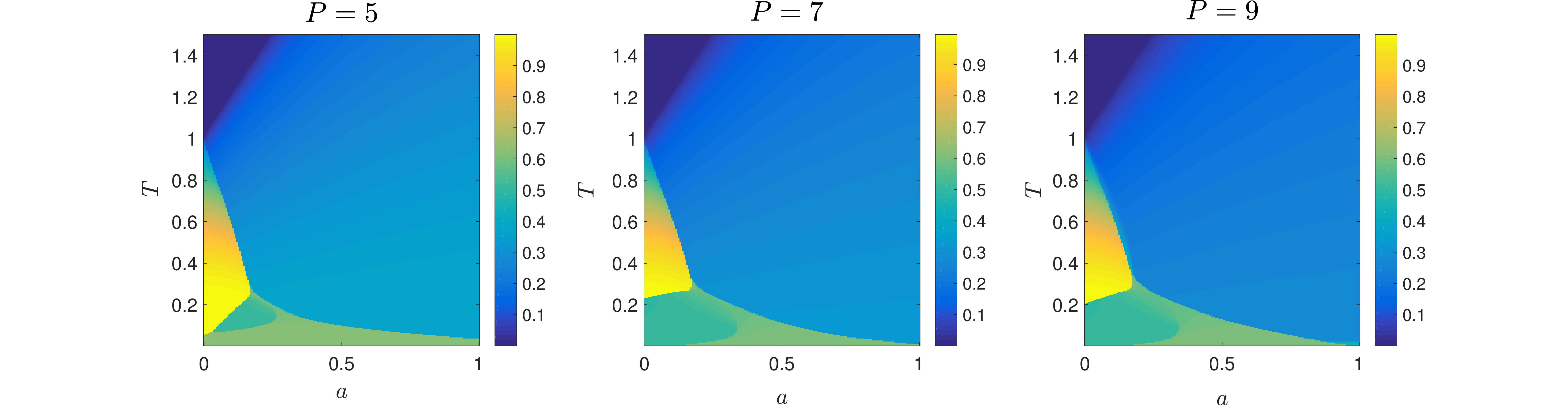}
    \caption{Solutions of the self-consistency equation for $P=5,7,9$ starting from a correlated state. The heat map is realized by considering the value of the largest Mattis magnetization.}
    \label{P579_M0corr}
\end{figure}
 \begin{figure}[tb]
	\centering
	\includegraphics[width=0.8\textwidth]{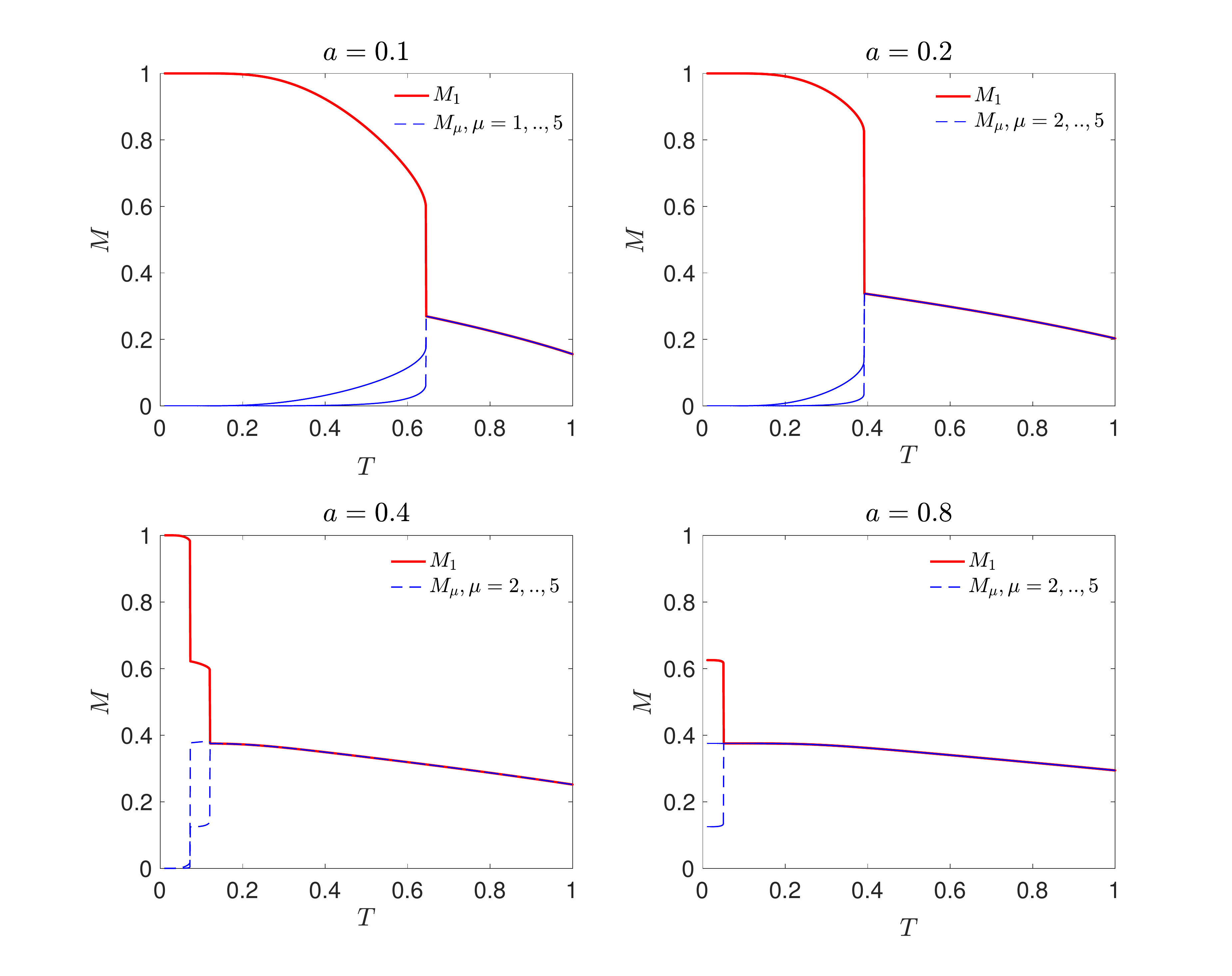}
	\caption{These panels show the numerical solution for the self-consistency equation ~\eqref{finale} in case $P = 5$ for different values of $a$; self-consistency equation are solved by taking as initialization the pure state $\mathbf{M}^T=(1,0,...,0)$. The red line is the one related to the magnetization of the first pattern, while the four blue dashed lines indicate the magnetizations of the other patterns. An analogous behavior has been observed also for $P =7$ and $P =9$.} 
	\label{a_fissato}
\end{figure}
Let us now fix the correlation parameter $a$ and see the evolution of the magnetization entries as the temperature is varied. 
As shown in Fig.~\ref{a_fissato}, under the assumption that the first pattern acts as a stimulus we get that $M_1$ remains the largest component until the symmetrical phase is reached where $M_{\mu}=m$ for any $\mu$. This behavior is robust with respect to $P$. 

 As a final investigation, we consider the robustness of the retrieval solution by solving the self-consistent equation (eq.~\eqref{finale}) starting from an initial configuration given by $\boldsymbol M ^T= (1 - \delta, \delta, ..., \delta)$, namely, a pure state affected by a degree of noise tunable by the parameter $\delta$, and homogenously spread over all the magnetization entries. The resulting solution is shown in Fig.~\ref{rumoreP5}: as $\delta$ is increased the correlated and the retrieval regions progressively breaks down into a symmetric-like region and the most sensitive area is the one corresponding to large value of $a$. This suggests that the correlated region is relatively unstable.  

 \begin{figure}[bt]
    \centering
    \includegraphics[width=1.0\textwidth]{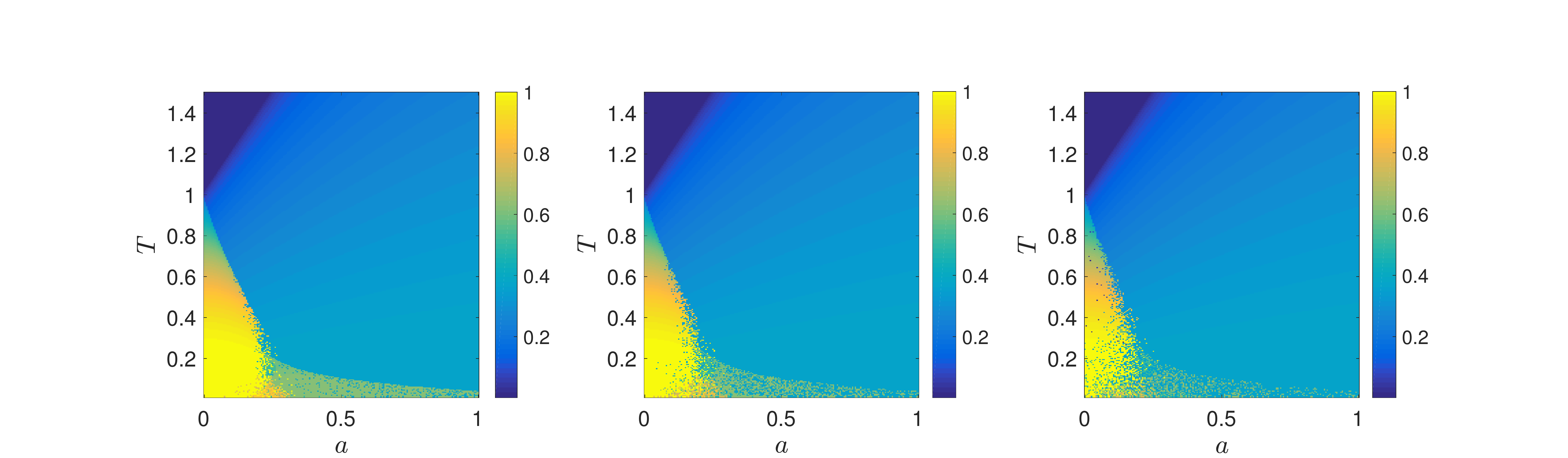}
    \caption{Solution of the self-consistency equation for $P = 5$ evaluated by using as initial state the vector $\boldsymbol{M }^T= (1 - \delta, \delta, ..., \delta)$, where $\delta = 0.15$ (leftmost panel),  $\delta = 0.20$ (central panel), and $\delta = 0.25$ (rightmost panel). Notice that the region mostly affected by the noise $\delta$ is the one corresponding to the correlated phase. The heat map is realized by considering the value of the largest Mattis magnetization.}
    \label{rumoreP5}
\end{figure}

\section{Conclusions} \label{sec:conclusions}

In this work we considered the ``relativistic'' Hopfield model in the low-load regime and we allowed for possible temporal correlations among memories. This is accomplished by revising the Hebbian couplings as $J_{ij} = \frac{1}{N} \sum_{\mu} [\xi_i^{\mu} \xi_j^{\mu} + a(\xi_i^{\mu+1} \xi_j^{\mu} + \xi_i^{\mu-1} \xi_j^{\mu})]$ in agreement with previous experimental results \cite{Amit2,Miyashita-1988, MiyashitaChang-1988}. The model parameters are the extent of temporal correlation $a$ and the external noise $T$.
We first addressed the model analytically -- showing the existence of the thermodynamic limit and calculating the free-energy, whence the self-consistency equations for the order parameters -- and numerically -- deriving a phase diagram displaying different qualitative behaviors of the system in the $(T,a)$ space. In particular, beyond the ergodic and the retrieval regions, we highlight the emergence of a so-called symmetric region, characterized by states where the overlap $m_{\mu}$ between the $\mu$-th memory and the neural configuration is $M_{\mu}=m>0$ for any $\mu$, and a correlated region, characterized by states with a hierarchical arrangement of overlaps which mirrors the temporal correlation among memories, that is $M_1 \ge M_2=M_P \ge M_3 =M_{P-1} ... $ when the stimulate memory is $\boldsymbol{\xi^1}$. The appearance of both the symmetric and the correlated region is a genuine effect of correlation and, in fact, they were also found in the classic Hopfield model with temporal correlation \cite{GriniastyTsodyksAmit-1993,Cugliandolo-1993,CugliandoloTsodyks-1994,Agliari-Dantoni}.
The ``relativistic'' nature of the model under investigation just makes these regions wider in such a way that, even at relatively small values of temperature the system may relax in a symmetric state (when $a$ is relatively small) or in a correlated state (when $a$ is relatively large). 
Therefore, in order to highlight the effect of correlation, the relativistic model is more suitable than the classical model.

\appendix

\section{Self-average of the pressure} \label{sec:automedia}
In this Appendix we report the proof of Proposition 1, which is reported hereafter for completeness 
\begin{customlemma}{1}
In the thermodynamic limit, the self-average property of the intensive pressure holds, i.e.
\begin{equation}
\lim_{N \to \infty} \mathbb{E} \left \{ F_{N,\beta,a}(\boldsymbol{\xi})-\mathbb{E} [F_{N,\beta,a}( \boldsymbol{\xi})]  \right \}^2 =0
\end{equation}
where $$F_{N,\beta,a}(\boldsymbol{\xi})=\frac{1}{N}\log Z_{N,\beta,a}(\boldsymbol{\xi})=\frac{1}{N}\log\sum_{\{ \boldsymbol \sigma \}}e^{- \beta H_{N,a}(\boldsymbol \sigma| \boldsymbol{\xi})}$$ and 
\begin{equation}
     \label{Hopfield-relativistico}
 H_{N,a}(\boldsymbol \sigma| \boldsymbol{\xi}) = - N \sqrt{1+ \frac{1}{N^2}\sum_{\mu=1}^{P}\sum_{i,j=1}^{N,N}{\sigma_i \sigma_j\tilde{\xi_i^{\mu}}\tilde{
\xi_j^{\mu} }}}.
\end{equation}
\end{customlemma}
To this aim we shall adapt the martingale method originally developed for Hopfield networks \cite{pastur1,pastur2,tirozzi}.
Notice that, for simplicity of notation, we will indicate $F_{N,\beta,a}(\boldsymbol{\xi})$ with $F_N$ in the proposition proof. 
\begin{proof}
Let $\mathcal{R}_k^N$ be the $\sigma$-algebras generated by the sets $\{\tilde{\xi}_i^\mu\}_{i\ge k}^{\mu=1,...,P}$ where  $k=1,...,N$ 
and $F_N^{k}$ be the conditional expectation of $F_N$ with respect to $\mathcal{R}_k^N$ i.e 
\begin{equation} 
F_N^k=\mathbb{E}[F_N|\mathcal{R}_k^N]=\mathbb{E}_{<k}[F_N]
\end{equation}
where 
\begin{equation}\label{conditional_average}
\mathbb{E}_{<k}[\ \cdot\ ]:=\frac{1}{2^{P(k-1)}}\sum_{\{\xi^\mu_i\}_{i<k}^{\mu=1,...,P}}(\ \cdot \ ).
\end{equation}
Being $\{\tilde{\xi}_i^\mu\}_{i\ge k}^{\mu=1,...,P}$ a descending family of $\sigma$-algebras then also the $\sigma$-generated by this family will be such that $\mathcal{R}_{k+1}^N\subseteq\mathcal{R}_k^N$ $\forall k$. As a consequence, the sequence constituted by the stochastic variables $F_N^k$ and by the family $\left\{\mathcal{R}_k^N\right\}_{k=1}^N$ fulfill the following  Martingale property:
\begin{equation}
\mathbb{E}[F^k_N|\mathcal{R}_l^N]=\begin{cases} F^k_N, & \mbox{if }k>l \\ F^l_N, & \mbox{if }k\le l. \end{cases}
\end{equation}
It is possible to notice that
\begin{equation*}
\sum_{k=1}^N F_N^k-F_N^{k+1}=F_N^1-F_N^{N+1}= \log Z_N - \mathbb{E} (\log Z_N), 
\end{equation*}
therefore if we define $\Psi_k:=F_N^k-F_N^{k+1}$ we get
\begin{equation}
F_N-\mathbb{E}[F_N]=\frac{1}{N}\sum_{k=1}^N\Psi_k.
\end{equation}
As a consequence we can write 
\begin{equation}
\mathbb{E}\left[(F_N-\mathbb{E}F_N)^2\right]=\frac{1}{N^2}\sum_{k=1}^N\mathbb{E}[\Psi^2_k]+\frac{2}{N^2}\sum_{k=1}^N\sum_{l=k+1}^N \mathbb{E}[\Psi_k \Psi_l] 
\end{equation}
where in the right hand side we split the diagonal and the off-diagonal contributions. 
\\Using the properties of the conditional expectation, we have
\begin{equation*}
\begin{split}
\mathbb{E}[\Psi_k \Psi_l]&=\mathbb{E}\left[\mathbb{E}\left[\Psi_k \Psi_l|\mathcal{R}_l^N\right]\right]=\mathbb{E}\left[\Psi_l\mathbb{E}[\Psi_k |\mathcal{R}_l^N]\right]=\mathbb{E}\left[\Psi_l\mathbb{E}[F_N^k-F^{k+1}_N |\mathcal{R}_l^N]\right]\\&=\mathbb{E}\left[\Psi_l(F_N^l-F^{l}_N )\right]=0
\end{split}
\end{equation*}
thus, in order to obtain the result, it is enough to show that
\begin{equation}\label{lim}
\lim_{N\to\infty}\frac{1}{N^2}\sum_{k=1}^N\mathbb{E}[\Psi^2_k]=0.
\end{equation}
We will prove eq. \eqref{lim} by proving that $\mathbb{E}[\Psi^2_k]$ is bounded for each $k$.

To this aim, let us define the following interpolating functions
\begin{equation}
\begin{split}
&\Phi_k(t):= - N \sqrt{1+ \frac{1}{N^2}\sum_{\mu=1}^{P}\sum_{i\ne j\ne k}^{N,N}{\sigma_i \sigma_j\tilde{\xi_i^{\mu}}\tilde{\xi_j^{\mu} }}+\frac{t}{N}\sum_{\mu=1}^{P}\sum_{i\ne k}^{N}{\sigma_i \sigma_k\tilde{\xi_i^{\mu}}\tilde{\xi_k^{\mu} }}}.
\\&g_k(t):=\log Z_N(\Phi_k(t))-\log Z_N(\Phi_k(0))
\end{split}
\end{equation}
the interpolating function $\Phi_k$ returns the original Hamiltonian when $t = 1$. 
\\By applying the conditional average \eqref{conditional_average} to $g_k(1)$ and comparing with the definition $\Psi_k:=F_N^k-F_N^{k+1}$ we have the following identity
\begin{equation}
\Psi_k =\mathbb{E}_{<k}(g_k(1))-\mathbb{E}_{<k+1}(g_k(1))
\end{equation}
from which we get 
\begin{equation}\label{disug}
\mathbb{E}\Psi^2_k\le 2\mathbb{E}[g_k(1)^2]. 
\end{equation}

By making a Taylor expansion of $g_k(t)$ around $t = 0$ and $t=1$ we get,
\begin{equation}\label{Taylor1}
g_k(t)=g_k(0)+g_k'(0)t+g''_k(\eta)
\end{equation}
\begin{equation}\label{Taylor2}
g_k(t)=g_k(1)+g_k'(1)(t-1)+g''_k(\eta)
\end{equation}
where $0\le\eta\le1$. 
By calculating \eqref{Taylor1} in $t=1$ and \eqref{Taylor2} in $t=0$ and observing that $g_k(0)=0$ we obtain
\begin{equation}\label{Taylor1.1}
g_k(1)=g_k'(0)+g''_k(\eta)
\end{equation}
\begin{equation}\label{Taylor2.1}
g_k(1)=g_k'(1)-g''_k(\eta).
\end{equation}
From the equations \eqref{Taylor1.1} and \eqref{Taylor2.1} we therefore have that 
\begin{equation}
2|g_k(1)|\le|g_k'(0)|+|g_k'(1)|.
\end{equation}
Now, if we define $R_k:=\sum_{\mu=1}^{P}\sum_{i\ne k}^{N}{\sigma_i \sigma_k\tilde{\xi}_i^{\mu}\tilde{\xi}_k^{\mu}}$ we can observe that $g_k'(1)$ is just the thermal average of this term. Moreover, recalling that the variables $\tilde{\xi}$ are uncorrelated and that we are in the low load regime, one can state that $|g'_k(1)|$ is bounded. With similar arguments it is possible to conclude that $|g'_k(0)|$ is also bounded by a constant. We can then write that 
\begin{equation}
|g_k(1)|\le|g_k'(0)|+|g_k'(1)|\le C. 
\end{equation}
Now, recalling \eqref{lim} and applying the bound just found we get 
\begin{equation}
\lim_{N\to \infty}\mathbb{E}\left[(F_N-\mathbb{E}F_N)^2\right]=\lim_{N\to\infty}\frac{1}{N^2}\sum_{k=1}^N\mathbb{E}[\Psi^2_k]\le\lim_{N\to \infty} \frac{CN}{N^2}=0.
\end{equation}
\end{proof}

\section{Properties of the correlation matrix and pattern rotation} \label{app:Conti1}
In this Appendix, we will give more details about the temporal correlation matrix with periodic boundary conditions. For the sake of clearness, we report here the matrix $\boldsymbol X$ for the model under consideration:
\begin{eqnarray} 
\boldsymbol{X}  = \left(
\begin{array}{ccccc}
1 & a  & \cdots &0& a  \\
a & 1& & 0 & 0  \\
\vdots & \vdots & \ddots & \vdots & \vdots  \\
0 & 0 & \dots & 0 & a  \\
a & 0 & \cdots & a & 1  \\
\end{array}
\right).
\end{eqnarray}
Trivially, the matrix is symmetric. We now compute its spectrum. To do this, it is useful to introduce a related matric $\boldsymbol X'$ which is obtained by simply removing the boundary conditions, i.e.
\begin{eqnarray} 
\boldsymbol{X}'  = \left(
\begin{array}{ccccc}
1 & a  & \cdots &0& 0  \\
a & 1& & 0 & 0  \\
\vdots & \vdots & \ddots & \vdots & \vdots  \\
0 & 0 & \dots & 0 & a  \\
0& 0 & \cdots & a & 1  \\
\end{array}
\right).
\end{eqnarray}
Now, let as denote (for fixed $P$) the characteristic polynomials of these two matrix, i.e. $D_P(\lambda):=\mbox{ det}_P(\boldsymbol{X}- \lambda \mathbb{I })$ and $D'_P(\lambda):=\mbox{ det}_P(\boldsymbol{X}'- \lambda \mathbb{I })$, where the subscript $P$ stands for the fact that we are taking the determinant of $P\times P$-dimensional matrices. By direct computation, we can see that the two characteristic polynomials are related with the following identity:
\begin{equation}\label{eq:DDprime}
D_P (\lambda)= D'_P(\lambda) -a^2 D'_{P-2} (\lambda)+2(-1)^{P+1} a^P.
\end{equation}

The determination of the characteristic polynomial in absence of the boundary conditions is much more tractable because of the structure of the matrix $\boldsymbol X'$. Furthermore, it is easy to check that it satisfies recurrence relation of Fibonacci-type:
\begin{equation}
D'_P (\lambda)-(1-\lambda) D'_{P-1}(\lambda) +a^2 D_{P-2}'(\lambda)=0,
\end{equation}
and can be solved in a standard way. Indeed, we can look for solutions of the form
\begin{equation}
D'_P(\lambda)=A r_+^P +B r_-^P,
\end{equation}
where $r_\pm$ are the roots of the equation $r^2-(1-\lambda)r+a^2=0$.
Now, setting $\phi=(1-\lambda)/2$ and using the initial conditions $D_1'(\lambda)=2 \phi$ and $D_2'(\lambda)=4\phi^2-a^2$, we obtain $A=\frac{1}{2} \left(1+\frac{\phi}{\sqrt{\phi^2-a^2}}\right)$ and $B=\frac{1}{2} \left(1-\frac{\phi}{\sqrt{\phi^2-a^2}}\right)$. Moreover, noticing that $r_\pm=\phi\pm\sqrt{\phi^2-a^2}$, we get the expression 
\begin{equation*}
D'_P(\lambda)= \frac{1}{2} \Big(1+\frac{\phi}{\sqrt{\phi^2-a^2}}\Big)(\phi+\sqrt{\phi^2-a^2})^P+\frac12 \Big(1-\frac{\phi}{\sqrt{\phi^2-a^2}}\Big)(\phi-\sqrt{\phi^2-a^2})^P.
\end{equation*}
We checked this solution up to $P=30$.
Due to the relation \eqref{eq:DDprime}, we can use this result to get
\begin{equation}
D_{P}(\lambda)= ( \phi - \sqrt{\phi^{2}-a^2})^P+( \phi + \sqrt{\phi^2 - a^2})^P -2(-1)^Pa^P.
\end{equation}
Now, setting $D_P(\lambda)=0$ and replacing $\phi=-a \cos \theta$, we finally get $\theta = 2\pi k/P$, which means that the eigenvalues are fixed to
\begin{equation}
\lambda_k= 1+2a \cos \frac{2\pi k}{P},
\end{equation}
with $k=0,1,\dots,P-1$. We stress that, for $a<1/2$, the entire spectrum is positive.\par
Let us now consider the coupling matrix
\begin{equation}
J_{ij}=\frac1{2N}\sum_{\mu,\nu=1}^P \xi^\mu_i X_{\mu\nu}\xi^\nu_j.
\end{equation}
The matrix $\boldsymbol{X}$ is real and symmetric, thus it is diagonalizable by means of an orthogonal transformation. Then, we can write $\boldsymbol{D}=\boldsymbol{U}^{T}\boldsymbol{X}\boldsymbol{U}$, where $\boldsymbol{D}$ is the diagonal matrix composed by placing the eigenvalues of $\boldsymbol X$ on the main diagonal. Since for $a<1/2$ the spectrum is positive, the matrix $\boldsymbol{D}$ admits a square root $\sqrt {\boldsymbol{D}}$. As a consequence, the interaction matrix $\boldsymbol{J}$ can be rewritten as
\begin{equation}
J_{ij}=\frac1{2N} \sum _{\rho=1}^P \tilde\xi^\rho _i \tilde\xi^\rho _j,
\end{equation}
with the rotated patterns
\begin{equation}
\tilde\xi^\rho _j=\sum_{\nu=1}^P (\sqrt {\boldsymbol{D}} \boldsymbol{U}^{T})_{\rho\nu}\xi^\nu_j.
\end{equation}
We stress that this rotation has no effect on correlation (recall that $\xi^\mu$ are drawn i.i.d. with equal probability $\mathbb{P}(\xi^\mu_i=\pm1)={1}/{2}$). To check this we can compute the normalized covariance matrix (Bravais-Pearson correlation coefficient):
\begin{equation}\label{eq:Cov}
\text{Cov}(\tilde{{\boldsymbol{\xi}^\rho}}, \tilde{{\boldsymbol{\xi}^\mu}})=\frac{\text{Cov}(\tilde{{\boldsymbol{\xi}^\rho}},\tilde{{\boldsymbol\xi^\mu)}}}{\sqrt{\text{Var}(\tilde{{\boldsymbol\xi^\rho}})\text{Var}(\tilde{{\boldsymbol\xi^\mu}})}}.
\end{equation}
Let us then calculate sample variance and covariance:
\begin{equation}
\text{Var}_S(\tilde{{\boldsymbol\xi^\rho}})=\frac{1}{N}\sum_{i=1}^N( \tilde{{\xi^\rho_i}})^2-\Big(\frac{1}{N}\sum_{i=1}^N \tilde{{\xi^\rho_i}}\Big)^2.
\end{equation}
The first term is easy to compute
\begin{equation}
\frac{1}{N}\sum_{i=1}^N( \tilde{{\xi^\rho_i}})^2=(\sqrt {\boldsymbol{D}} \boldsymbol{U}^{T} \boldsymbol{C}\boldsymbol U \sqrt {\boldsymbol{D}})_{\rho\rho},
\end{equation}
where $C_{\rho\mu}=\frac1N \sum_{i=1}^N \xi^\rho_i \xi^\mu_i$ is the pattern correlation matrix. For large $N$, this tends to the identity matrix, so we have
\begin{equation}
\frac{1}{N}\sum_{i=1}^N( \tilde\xi_i^\rho)^2\underset{N\to \infty}{\sim} D_{\rho\rho}.
\end{equation}
On the other hand, $\sum_i \tilde\xi_i^\rho$ is just the distance covered by a random walk of $N$ steps, so it is $\mathcal{O}(\sqrt{N})$. The average value of the entries of $\tilde {\boldsymbol \xi}^{\rho}$, for large $N$, goes to zero as $\mathcal O (N^{-1/2})$. In conclusion, this proves that
\begin{equation}
\text{Var}_S(\tilde{{\boldsymbol{\xi}^\rho}})\underset{N\to \infty}{\sim} D_{\rho\rho}.
\end{equation}
Regarding the sample covariance matrix, one can see that
\begin{equation}
\text{Cov}_S(\tilde{{\boldsymbol \xi^\rho}},\tilde{{\boldsymbol \xi^\mu}})=\frac1N \sum_{i=1}^N \tilde{\xi}^\rho_i \tilde\xi^\mu_i=\sum_{\nu,\sigma=1}^P(\sqrt {\boldsymbol{D}} \boldsymbol{U}^{T}\boldsymbol{C}  \boldsymbol{U}\sqrt {\boldsymbol{D}})_{\rho \mu}.
\end{equation}
Again, for a large $N$ the correlation matrix $\boldsymbol{C}$ tends to the identity, therefore \eqref{eq:Cov} leads to
\begin{eqnarray}
\text{Cov}_S(\tilde{{\boldsymbol \xi^\rho}},\tilde{{\boldsymbol \xi^\mu}})\underset{N\to \infty}{\sim}\frac{D_{\rho\mu}}{\sqrt{D_{\rho\rho}D_{\mu \mu}}}.
\end{eqnarray}
Since $\boldsymbol{D}$ is diagonal, $D_{\rho\mu}= D_{\mu\mu}\delta_{\rho \mu}$, we get
\begin{equation}
\text{Cov}_S(\tilde{{\boldsymbol \xi^\rho}},\tilde{{\boldsymbol \xi^\mu}})\underset{N\to \infty}{\sim}\frac{D_{\mu\mu}\delta_{\rho \mu}}{\sqrt{D_{\rho\rho}D_{\mu \mu}}}= \delta_{\rho \mu}.
\end{equation}
In the thermodynamic limit, the rotated patterns $\tilde {\boldsymbol \xi}$ are therefore uncorrelated.

\section{Existence of the thermodynamic limit of the pressure}\label{sec:esistenza}
In this Appendix, we report the proof of the existence of the thermodynamic limit for the intensive pressure of the ``relativistic'' Hopfield model with correlated pattern.
To this aim, we consider a system made of $N$ neurons and other two systems made of, respectively, $N_1$ and $N_2$ neurons with $N_1$ and $N_2$ such that
  $N=N_1+N_2$, and we show by interpolation that $F_{N,\beta, a} < F_{N_1, \beta, a} +  F_{N_2, \beta, a}$. 
To this goal we also need to introduce the Mattis magnetizations related to the $P$ patterns in the largest model (made of $N$ neurons) and those in the smaller ones:
\begin{equation*}
\tilde{{m}}_{N_{1}}^\mu=\frac{1}{N_{1}}\sum_{i=1}^{N_{1}}\tilde{{\xi}}^\mu_{i}\sigma
_{i},\quad \tilde{{m}}_{N_{2}}^\mu=\frac{1}{N_{2}}\sum_{i=1}^{N_{2}}\tilde{{\xi}}_{i}^\mu\sigma _{i}.
\end{equation*}%
By denoting the $N_2$ neurons of the second system with $\sigma_{N_1 +1},..., \sigma_{N_1+ N_2}$ we can write 
$$\begin{aligned}
\tilde{{m}}_{N}^\mu&=\frac{1}{N}\sum_{i=1}^{N}\tilde{{\xi}}_{i}^\mu\sigma_{i}=\frac{1}{N}
\Big( \sum_{i=1}^{N_{1}}\tilde{{\xi}}_{i}^\mu\sigma_{i}+\sum_{j=N_1 +1}^{N_1 +N_{2}}
{\tilde{{\xi}}}^\mu_{j}{\sigma}_{j} \Big) =\frac{1}{N}\left( N_{1}\tilde{{m}}^\mu_{N_{1}}+N_{2}\tilde{{m}}^\mu_{N_{2}} \right)
=\rho _{1}\tilde{{m}}_{N_{1}}^\mu+\rho _{2}\tilde{{m}}_{N_{2}}^\mu,
\end{aligned}$$

where $\rho_1$ and $\rho_2$ represents the relative densities
\begin{equation*}
\rho _{1}=\frac{N_{1}}{N}\quad \text{and\quad }\rho _{2}=\frac{N_{2}}{N}.
\end{equation*}
Let us introduce the interpolating parameter $t \in \left[0,1\right]$ used to define the interpolating intensive pressure $F_N(t)$ as follows\footnote{Notice that, in order to enlight the notation, here we dropped the dependence on the set of digital patterns.}
 	\begin{equation}
 	\begin{aligned}
	\label{alpha-intera}
	 F_N \left( t \right) & =  \frac{1}{N} \log \sum_{ \{ \boldsymbol \sigma \} }^{2^{N}}\exp \biggl[ t\beta N\sqrt{1+\tilde{\mathbf{m}}_{N}^{2}} + \left( 1-t\right)\beta \left( N_{1}\sqrt{1+{\tilde{\mathbf{m}}_{N_{1}}}^{2}}+N_{2}\sqrt{1+\tilde{\mathbf{m}}_{N_{2}}^{2}} \right) \biggr],
	\end{aligned}
	\end{equation}
where we omit the subscript $\beta, a$ to lighten the notation.

Notice that in the limit $t \to 1$ and $t \to 0$ we get, respectively,
\begin{equation}
F_N \left( 1\right) =\frac{1}{N}  \log \sum_{\left\{ \boldsymbol \sigma \right\}
}^{2^{N}}\exp \big( \beta N\sqrt{1+\tilde{\mathbf{m}}_{N}^{2}}\big) = F_{N, \beta, a} 
  \label{34}
\end{equation}
and
\begin{align*}
F_N \left(0\right) &=\frac{1}{N}  \log \sum_{\left\{\boldsymbol \sigma
\right\} }^{2^{N}}\exp \left(\beta N_{1}\sqrt{1+\tilde{\mathbf{m}}_{N_{1}}^{2}}+\beta N_{2}
\sqrt{1+\tilde{\mathbf{m}}_{N_{2}}^{2}}\right)  \notag 
 =\frac{1}{N}\mathbb{E}\left( \log
Z_{N_{1},\beta, a}+\log Z_{N_{2},\beta, a}\right)  =\notag \\
&=\rho _{1} F_{N_{1},\beta, a}  +\rho _{2} F_{N_{2},\beta, a}.  \label{35}
\end{align*}%
Applying the fundamental theorem of calculus we have
\begin{equation}
F_N \left( 1\right) = F_N \left( 0\right) +\int_{0}^{1}%
\frac{d F_N \left( t\right) }{d t} \bigg \rvert_{t =t'} dt'. \label{33}
\end{equation}
Now, in order to prove that $F_N \left( \beta \right)$ is sub-additive we just need to prove that the derivative with respect to $t$ of the interpolating pressure is non-positive. By direct evaluation we get

\begin{equation}
\label{30}
\begin{aligned}
\frac{d F_N \left( t\right) }{d t}& = \frac{d
}{d t}\frac{1}{N}\log \sum_{\left \{\boldsymbol  \sigma \right\} }^{2^{N}}\exp
\left [ t\beta N\sqrt{1+\tilde{\mathbf{m}}_{N}^{2}}+\left( 1-t\right)\beta \left ( N_{1}\sqrt{1+%
\tilde{\mathbf{m}}_{N_{1}}^{2}}+N_{2}\sqrt{1+\tilde{\mathbf{m}}_{N_{2}}^{2}}\right) \right]
 \\
&=\frac{1}{N} \big\langle N\sqrt{1+\tilde{\mathbf{m}}_{N}^{2}}-N _{1}\sqrt{1+\tilde{\mathbf{m}}_{N_{1}}^{2}}-N_{2}\sqrt{1+\tilde{\mathbf{m}}_{N_{2}}^{2}}\big\rangle _{t} \\
&=\big\langle \sqrt{1+\tilde{\mathbf{m}}_{N}^{2}}-\rho _{1}\sqrt{1+\tilde{\mathbf{m}}
_{N_{1}}^{2}}-\rho _{2}\sqrt{1+\tilde{\mathbf{m}}_{N_{2}}^{2}}\big\rangle _{t},
\end{aligned}
\end{equation}
where 
%
\begin{equation}
    \label{mediat}
\langle f \rangle_t= \frac{\sum_{\{ \boldsymbol \sigma \}}^{2^N} [e^{- t H_N- (1-t )( H_{N_{1}} + H_{N_{2}} ) }f]}{\tilde{Z}_N(t)},
\end{equation}
is the Boltzmann average with respect to the interpolating partition function $\tilde{Z}_N(t)$ associated to the interpolating pressure \eqref{alpha-intera}, that is
\begin{equation}
\tilde{Z}_N(t) = \sum_{\{ \boldsymbol \sigma \}}^{2^N} e^{- \beta t H_N- \beta(1-t )( H_{N_{1}} + H_{N_{2}} ) }.
\end{equation}
\medskip
\begin{Proposition}\label{proposizioneModelloRelativisticoSubadditivo}  
The $t$-derivative of $F_{N}(t)$ is non positive.
\end{Proposition}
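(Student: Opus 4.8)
The plan is to combine the convexity of the map $\mathbf{x}\mapsto\sqrt{1+\mathbf{x}^2}$ with the fact, established above, that the composite magnetization is a convex combination of the magnetizations of the two sub-systems. Indeed, $\tilde{\mathbf{m}}_N^\mu=\rho_1\tilde{\mathbf{m}}_{N_1}^\mu+\rho_2\tilde{\mathbf{m}}_{N_2}^\mu$ with $\rho_1,\rho_2\ge 0$ and $\rho_1+\rho_2=1$, so for every fixed spin configuration $\boldsymbol\sigma$ the vector $\tilde{\mathbf{m}}_N\in\mathbb{R}^P$ lies on the segment joining $\tilde{\mathbf{m}}_{N_1}$ and $\tilde{\mathbf{m}}_{N_2}$.

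First I would record the elementary lemma that $\phi:\mathbb{R}^P\to\mathbb{R}$, $\phi(\mathbf{x}):=\sqrt{1+\mathbf{x}^2}$, is convex. The quickest argument is that $\phi(\mathbf{x})=\big\|(1,\mathbf{x})\big\|_2$ is the Euclidean norm of $(1,\mathbf{x})\in\mathbb{R}^{P+1}$ composed with the affine embedding $\mathbf{x}\mapsto(1,\mathbf{x})$; norms are convex and convexity is stable under affine substitution, so $\phi$ is convex. Equivalently, one may compute the Hessian $\partial_\mu\partial_\nu\phi=\phi^{-3}\big((1+\mathbf{x}^2)\delta_{\mu\nu}-x_\mu x_\nu\big)$ and verify that it is positive semidefinite by the Cauchy--Schwarz inequality.

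Then I would apply convexity pointwise in $\boldsymbol\sigma$, obtaining
\[
\sqrt{1+\tilde{\mathbf{m}}_N^2}=\phi\big(\rho_1\tilde{\mathbf{m}}_{N_1}+\rho_2\tilde{\mathbf{m}}_{N_2}\big)\le\rho_1\sqrt{1+\tilde{\mathbf{m}}_{N_1}^2}+\rho_2\sqrt{1+\tilde{\mathbf{m}}_{N_2}^2},
\]
so that the observable $\sqrt{1+\tilde{\mathbf{m}}_N^2}-\rho_1\sqrt{1+\tilde{\mathbf{m}}_{N_1}^2}-\rho_2\sqrt{1+\tilde{\mathbf{m}}_{N_2}^2}$ is $\le 0$ for each configuration. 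Since the interpolating Boltzmann average $\langle\cdot\rangle_t$ of \eqref{mediat} assigns nonnegative weights $e^{-\beta t H_N-\beta(1-t)(H_{N_1}+H_{N_2})}/\tilde{Z}_N(t)$, it preserves nonpositivity; together with the expression \eqref{30} for $dF_N(t)/dt$ this yields $dF_N(t)/dt\le 0$, which is the claim.

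There is essentially no obstacle here: the only substantive ingredient is the convexity of $\phi$, which is classical, and the sole point needing a word of care is that the inequality must be established configuration by configuration \emph{before} averaging — which is precisely why it holds uniformly in $t$ and independently of the interpolating weights.
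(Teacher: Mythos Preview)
Your proof is correct and follows essentially the same approach as the paper: both reduce the claim to the convexity of $\mathbf{x}\mapsto\sqrt{1+\mathbf{x}^2}$ applied to the convex combination $\tilde{\mathbf{m}}_N=\rho_1\tilde{\mathbf{m}}_{N_1}+\rho_2\tilde{\mathbf{m}}_{N_2}$, obtaining the pointwise inequality before averaging. Your version is slightly more explicit about why convexity holds (via the norm-composed-with-affine argument or the Hessian) and why the interpolating average preserves sign, but the substance is identical.
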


\begin{proof}
Due to the equalities shown above, the proof simply requires that the right hand side of equation (\ref{30}) is smaller than or equal to zero, namely

\begin{equation}
\label{aste}
\sqrt{1+\tilde{\mathbf{m}}_{N}^{2}}-\rho _{1}\sqrt{1+\tilde{\mathbf{m}}_{N_{1}}^{2}}-\rho
_{2}\sqrt{1+\tilde{\mathbf{m}}_{N_{2}}^{2}}\leq 0.
\end{equation}
Since $\rho _{2}=1-\rho _{1}$, Equation (\ref{aste}) can be rewritten in terms of the variables $\tilde{\mathbf{m}}_{N_{1}}$, $\tilde{\mathbf{m}}_{N_{2}}$ and $\rho _{1}$ only (in the following, in order to lighten the notiation we will denote $\rho_1$ with $\rho$, and, in particular, $ \tilde{\mathbf{m}}_{N}=\rho
\tilde{\mathbf{m}}_{N_{1}}+\left( 1-\rho \right) \tilde{\mathbf{m}}_{N_{2}}$). We rewrite equation (\ref{aste}) as 
\begin{equation}
\sqrt{1+\left( \rho \tilde{\mathbf{m}}_{N_{1}}+\left( 1-\rho \right) \tilde{\mathbf{m}}
_{N_{2}}\right) ^{2}}-\rho \sqrt{1+\tilde{\mathbf{m}}_{N_{1}}^{2}}-\left( 1-\rho
\right) \sqrt{1+\tilde{\mathbf{m}}_{N_{2}}^{2}} \leq 0. 
\end{equation}
In order to prove the inequality it is sufficient to notice that $$f:x\mapsto \sqrt{1+x^{2}},$$ is a convex function, that is, for $ \lambda \in \left[ 0,1\right] $, we have

\begin{equation*}
f\left( \lambda x_{1}+\left( 1-\lambda \right) x_{2}\right) \leq \lambda
f\left( x_{1}\right) +\left( 1-\lambda \right) f\left( x_{2}\right).
\end{equation*}%
Now, by identifying $\lambda =\rho,  x_{1}=\mathbf{m}_{N_{1}},  x_{2}=\mathbf{m}_{N_{2}},$ 
since $\mathbf{m}_{N}=\rho \mathbf{m}_{N_{1}}+ (1-\rho) \mathbf{m}_{N_{2}}$, we see that
\begin{equation*} 
\sqrt{1+\mathbf{m}_{N}^{2}}=\sqrt{1+\left( \rho _{1}\mathbf{m}_{N_{1}}+\rho
_{2}\mathbf{m}_{N_{2}}\right) ^{2}}\leq \rho _{1}\sqrt{1+\mathbf{m}%
_{N_{1}}^{2}}+\rho _{2}\sqrt{1+\mathbf{m}_{N_{2}}^{2}},
\end{equation*}
which proves our assertion.
\end{proof}

The proposition
\ref{proposizioneModelloRelativisticoSubadditivo} allows us to state that
\begin{equation*}
F_N \left( 1\right) - F_N \left( 0\right) =\int_{0}^{1}%
\frac{d F_N \left( t \right) }{d t} \bigg \rvert_{t=t'} dt' \leq 0,
\end{equation*}
meaning that
\begin{equation}
N F_{N, \beta, a } \leq N _{1} F _{N_{1}, \beta, a} + N_{2} F_{N_{2}, \beta, a}.  \label{36}
\end{equation}
By means of the Fekete's lemma, this result proves {\it de facto} the following\medskip
\begin{Theorem}
The thermodynamic limit of the intensive pressure of the relativistic model with temporally correlated patterns (Def. \eqref{Hopfield-relativistico}) exists and equals the lower bound of the sequence $F_{N, \beta, a}$ :
\begin{equation*}
\exists \lim_{N\rightarrow \infty } F_{N, \beta, a }
=\inf_{N \in \mathbb{N}} \left\{ F_{N, \beta, a} \right\}
= F_{ \beta, a}.
\end{equation*}
\end{Theorem}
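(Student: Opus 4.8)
The plan is to deduce the statement from Fekete's subadditive lemma, since essentially all the analytic work has already been done in Proposition~\ref{proposizioneModelloRelativisticoSubadditivo}. The strategy has three ingredients: (i) recast inequality \eqref{36} as subadditivity of an integer-indexed sequence; (ii) invoke Fekete's lemma to get convergence to the infimum; (iii) secure finiteness by a trivial uniform lower bound on the pressure.

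First I would set $a_N := N\,F_{N,\beta,a}$ and note that, by integrating the non-positive $t$-derivative of the interpolating pressure $F_N(t)$ supplied by Proposition~\ref{proposizioneModelloRelativisticoSubadditivo} and using the endpoint evaluations \eqref{34} and that of $F_N(0)$, one obtains precisely \eqref{36}, i.e. $a_{N_1+N_2}\le a_{N_1}+a_{N_2}$ for all $N_1,N_2\in\mathbb{N}$. Fekete's lemma then yields that $\lim_{N\to\infty}a_N/N=\lim_{N\to\infty}F_{N,\beta,a}$ exists in $[-\infty,+\infty)$ and equals $\inf_{N\in\mathbb{N}}F_{N,\beta,a}$.

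Next I would rule out the value $-\infty$. Since $\tilde{\mathbf m}_N^2\ge 0$ we have $\sqrt{1+\tilde{\mathbf m}_N^2}\ge 1$, hence $Z_{N,\beta,a}=\sum_{\{\boldsymbol\sigma\}}e^{\beta N\sqrt{1+\tilde{\mathbf m}_N^2}}\ge 2^N e^{\beta N}$ and therefore $F_{N,\beta,a}\ge \log 2+\beta$ uniformly in $N$ and in the pattern realization. The infimum is thus finite and defines $F_{\beta,a}$, which already gives the asserted chain of equalities.

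The only point deserving some care — and the closest thing to an obstacle — is the bookkeeping of the quenched randomness: the decoupling at $t=0$ into two independent subsystems is most transparent after taking the expectation $\mathbb{E}$ over the patterns, so strictly speaking Fekete's lemma first delivers the existence of $\lim_{N\to\infty}\mathbb{E}[F_{N,\beta,a}(\boldsymbol\xi)]$. I would then appeal to the self-averaging property established in Proposition~\ref{prop1} to conclude that $F_{N,\beta,a}(\boldsymbol\xi)$ itself converges, almost surely and in $L^2$, to the same deterministic limit $F_{\beta,a}$, which is exactly the claimed statement.
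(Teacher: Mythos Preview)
Your proposal is correct and follows essentially the same route as the paper: the paper's own proof of this theorem is the single sentence ``By means of the Fekete's lemma, this result proves \textit{de facto} the following Theorem,'' invoked immediately after the subadditivity inequality \eqref{36} derived from Proposition~\ref{proposizioneModelloRelativisticoSubadditivo}. Your additions --- the explicit lower bound $F_{N,\beta,a}\ge\log 2+\beta$ to exclude $-\infty$, and the observation that Fekete's lemma is cleanest when applied to the averaged sequence $\mathbb{E}[F_{N,\beta,a}(\boldsymbol\xi)]$ followed by an appeal to Proposition~\ref{prop1} --- are careful details that the paper leaves implicit, but they do not constitute a different approach.
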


 \section{Critical curve of ergodicity breaking phase transition} \label{sec:Tc_line}
This section aims to analytically determine the transition line dividing the ergodic phase from the spin-frosted one. Since we expect (as we numerically checked) that the ergodicity breaking takes place with a continuous phase transition in the order parameters, it is natural to study the self-consistency equations near the point $\boldsymbol M=0$. To do this, we remind that the self-consistency equation takes the following vectorial form
\begin{equation}
\label{vSC}
M_{\mu}  =\frac{(1+\boldsymbol{M}^T\boldsymbol{X} \boldsymbol{M})\mathbb E \,{\xi}^{\mu} \tanh\Big( \beta \frac{\sum_\rho {\xi}^\rho  (\boldsymbol{X} \boldsymbol{ M})_\rho}{\sqrt{1+\boldsymbol{M}^T\boldsymbol{X} \boldsymbol{M}}} \Big)}{1+\sum_\nu (\boldsymbol M^T \boldsymbol X)_{\nu }\mathbb E \,{\xi}^{\nu} \tanh\Big( \beta \frac{\sum_\rho {\xi}^\rho  (\boldsymbol{X} \boldsymbol{ M})_\rho}{\sqrt{1+\boldsymbol{M}^T\boldsymbol{X} \boldsymbol{M}}} \Big)} , \quad \ \forall \mu=1,...,P.
\end{equation}
For our concerns, we only need to expand the r.h.s. of the self-consistency equations up to the linear order in $\boldsymbol M$. In this case, it is easy to note that, for $\boldsymbol M\ll 1$, we have
\begin{equation}
\sqrt{1+\boldsymbol{M}^T\boldsymbol{X} \boldsymbol{M}}\simeq 1+\frac12 \boldsymbol{M}^T\boldsymbol{X} \boldsymbol{M},
\end{equation}
thus we see that non-trivial contributions from the relativistic expansion start from the second order in $\boldsymbol M$. Then, in this limit the argument of the hyperbolic tangent for the relativistic model reduces to its classical counterpart:
\begin{equation}
\tanh\Big( \beta \frac{\sum_\rho {\xi}^\rho  (\boldsymbol{X} \boldsymbol{ M})_\rho}{\sqrt{1+\boldsymbol{M}^T\boldsymbol{X} \boldsymbol{M}}} \Big)\simeq \tanh\Big( \beta {\sum_\rho {\xi}^\rho  (\boldsymbol{X} \boldsymbol{ M})_\rho} \Big)= \beta  {\sum_\rho {\xi}^\rho  (\boldsymbol{X} \boldsymbol{ M})_\rho}+\mathcal O (\boldsymbol M^2).
\end{equation}
Furthermore, both the factor $1+\boldsymbol{M}^T\boldsymbol{X} \boldsymbol{M}$ and the denominator in \eqref{vSC} have non-trivial contribution only at the order $\mathcal O (\boldsymbol M^2)$, thus we arrive to
\begin{equation}
M_\mu =\beta \sum_\rho \mathbb E\ \xi^\mu  \xi^\rho (\boldsymbol X \boldsymbol M)_\rho. 
\end{equation}
Now, the average $\mathbb E$ only apply on the product $\xi^\mu \xi ^\rho$. In particular, we see that $\mathbb E \xi^\rho \xi^\mu$ is the (theoretical) covariance of the $\xi$ variables, which turns out to be trivially $\mathbb E \xi^\rho \xi^\mu=\delta_{\mu\rho}$. With this results and by making explicit the structure of the temporal correlation matrix $\boldsymbol X$, we arrive at the result
\begin{equation}
\label{eq:nearErg}
M_\mu =\beta(M_\mu +a M_{\mu +1}+ aM_{\mu -1}).
\end{equation}
Now, we can further note that, by starting from the ergodic region and lowering the thermal noise $T=\beta^{-1}$, we enter in a thermodynamic phase which is characterized by fully symmetric states of the form $\boldsymbol{M}^T=(1,1,...1)m$ with $m\ne 0$. Then, by replacing this ansatz in Equation \eqref{eq:nearErg}, we get $m=\beta(1+2a)m$, from which we immediately get $\beta_c=\frac{1}{1+2a}$. Thus, the critical curve for the ergodicity breaking in the plane $(T,a)$ is given by the equation
\begin{equation}
T_c(a)=1+2a.
\end{equation}
Thus, if $T>T_c$ the only solution is $\boldsymbol{M}=\boldsymbol{0}$ while, if $T<T_c$ there exist solutions $\boldsymbol{M}\ne\boldsymbol{0}$. The line $T_c$ analytically found is consistent with the numerical solution of the equation \eqref{vSC}. 

%

\section*{Acknowledgments}
The Authors acknowledge partial financial fundings by Universit\`a Sapienza di Roma (Progetto Ateneo RG11715C7CC31E3D) and by ``Rete Match - Progetto Pythagoras''  (CUP:J48C17000250006).

\section*{Data availability}
Data sharing is not applicable to this article as no new data were created or analyzed in this study.

\bibliographystyle{unsrt} 

\end{document}